  \newtheorem{rem}{\protect\remarkname}
  \theoremstyle{plain}
  \newtheorem{lem}{\protect\lemmaname}
  \theoremstyle{plain}
  \newtheorem{prop}{\protect\propositionname}
  \theoremstyle{plain}
  \newtheorem{definition}{\protect Definition}
\def\logexpo{\frac{1}{\gamma}\log\Biggl(\mathbb{E}\Biggl[\exp\Biggl(\gamma\Biggl(}
\def\logexpc{\Biggr)\Biggr)\Biggr]\Biggr)}
\def \E{\mathbb{E}}
\def \P{\mathbb{P}}
\def \R{\mathbb{R}}
\def \Sum{\displaystyle\sum}
\DeclareMathOperator*{\argmin}{arg\,min}
\title{Accelerated Share Repurchase: pricing and execution strategy\thanks{This research has been conducted with the support of the Research Initiative ``Ex\'ecution optimale et statistiques de la liquidit\'e haute
fr\'equence'' under the aegis of the Europlace Institute of Finance. The
authors would like to thank Robert Almgren (Quantitative Brokers), Nicolas Grandchamp des Raux (HSBC France), Charles-Albert Lehalle (CFM), Terry Lyons (Oxford Man), Ramzi Maghrebi (HSBC France), Huyen Pham (Universit\'e Paris-Diderot), Chris Rogers (Cambridge) and Mathieu Rosenbaum (UPMC) for the discussions we had on the topic.}}
 \author{Olivier {\sc Gu\'eant} \footnote{Universit\'e Paris-Diderot, UFR de
Math\'ematiques, Laboratoire Jacques-Louis Lions, gueant@ljll.univ-paris-diderot.fr}
 \and Jiang {\sc Pu}  \footnote{Institut Europlace de Finance. Research Initiative ``Ex\'ecution optimale et statistiques de la liquidit\'e haute
fr\'equence'', jiang.pu.2009@m4x.org }
 \and Guillaume {\sc Royer}\footnote{CMAP, Ecole Polytechnique Paris, guillaume.royer@polytechnique.edu.}}
  \providecommand{\lemmaname}{Lemma}
  \providecommand{\propositionname}{Proposition}
  \providecommand{\remarkname}{Remark}
\begin{document}
\maketitle
\begin{abstract}
In this article, we consider the optimal execution problem
associated to accelerated share repurchase contracts. When firms want
to repurchase their own shares, they often enter such a contract with
a bank. The bank buys the shares for the firm and is paid the average
market price over the execution period, the length of the period being
decided upon by the bank during the buying process. Mathematically,
the problem is new and related to both option pricing (Asian and Bermudan
options) and optimal execution. We provide a model, along with associated
numerical methods, to determine the optimal stopping time and the
optimal buying strategy of the bank.

\vspace{10mm}

\noindent \textbf{Key words:} Optimal execution, ASR contracts, Optimal stopping,
Stochastic optimal control, Utility indifference pricing. \vspace{5mm}

\end{abstract}

\section{Introduction}

The mathematical literature on optimal execution often deals with
the trade-off between execution costs and price risk. Price risk may
be measured with respect to different benchmark prices: arrival price
in the case of an Implementation Shortfall (IS) order, closing price
of the day in the case of a Target Close (TC) order or VWAP over a
given period of time in the case of a VWAP order. In all these cases,
the definition of the benchmark (although not its value) is known
ex-ante. It is common however, for very large orders, especially when
a firm wants to (re)purchase its own shares, to consider a benchmark
price that is an average price over a period that is decided upon
over the course of the execution process. More precisely, in the case
of an Accelerated Share Repurchase contract, or more exactly a post-paid
Accelerated Share Repurchase (hereafter ASR) with fixed nominal, the bank has to deliver
the firm $Q$ shares at an exercise date $\tau$ chosen by the
bank over the course of the execution process among a set of pre-defined
dates. Then, in exchange for the shares, the firm pays the bank, at
date $\tau$, the arithmetic average over the period $[0,\tau]$ of the daily
VWAP prices.

An ASR is therefore an Asian-type option with Bermuda-style exercise
dates. Moreover, because quantities to deliver are usually large, we
need to take account of market impact and execution costs. The problem
we consider is therefore both a problem of option pricing and a problem
of optimal execution.

Option pricing and hedging with execution costs have been considered
mainly by Rogers and Singh \cite{rogerssingh} and then by Li and
Almgren in \cite{lialmgren}. In their settings, as opposed to the
literature on transaction costs (see for instance \cite{tcbarlessoner,tccvitanic1,tccvitanic2,tcleland})
and in line with the literature on optimal liquidation (see the seminal
paper by Almgren and Chriss \cite{almgren} and the two recent papers
\cite{gueant,schied}), the authors consider execution costs that
are not linear in (proportional to) the volume executed but rather strictly
convex to account for liquidity effects. Rogers and Singh consider
an objective function that penalizes both execution costs and mean-squared
hedging error at maturity. They obtain, in this close-to-mean-variance
framework, a closed form approximation for the optimal hedging strategy
when illiquidity costs are small. Li and Almgren, motivated by saw-tooth
patterns recently observed on five US stocks (see \cite{lehalle1,lehalle2}),
considered a model with both permanent and temporary impact. In their
model, they assume execution costs are quadratic and they consider a constant
$\Gamma$ approximation. Using a different objective function, they obtain a closed
form expression for the hedging strategy. Both papers do not consider
physical settlement but rather cash settlement%
\footnote{Recently, Guéant and Pu \cite{gueantpu} also proposed a method to
price and hedge a vanilla option in a utility-based framework, under
general assumptions on market impact. In particular they considered physical
settlement.%
} and ignore therefore part of the costs. Moreover, they only consider
European payoffs and our problem is therefore more complex.

In a recent working paper, Jaimungal et al. \cite{jaimungal} proposed
a model for Accelerated Share Repurchase contracts. Interestingly, they managed
to reduce the problem to a 3-variable PDE whereas the initial problem
is in dimension 5. The PDE they obtained is then solved numerically.
Our model differs from their model in many ways. A first important
difference has to do with market impact modeling: in \cite{jaimungal}, the authors
only considered the case of quadratic execution costs while we present
a model with general assumptions for market impact functions, with
both temporary market impact and permanent market impact. Another
important point is that their model considered the case of a risk-neutral
agent with inventory penalties whereas our model is more general in
that we consider a risk-averse agent in a utility-based framework,
allowing then indifference pricing. In the framework we propose, we
also manage to reduce the 5-variable Bellman equation associated to
the problem to a 3-variable equation that can be solved numerically
using classical tools. Our model also differs from the model presented
in \cite{jaimungal} as the authors used a continuous model and replaced
the discrete average of daily VWAP prices by a global TWAP. As a consequence of that, we avoid the computational difficulties induced by continuous time (see the numerical issues mentioned in \cite{jaimungal}). In our discrete time model, we consider, in line with
the definition of the payoff, daily fixing of prices and a payoff linked to
the arithmetic average of daily VWAP prices since inception. Finally, a minor difference is that
they restrict their model to a variant of ASR where the bank can exercise
at any time, whereas we consider the more realistic case of Bermudan
exercise dates.

In Section \ref{sect: setup}, we present the basic hypotheses of our model and the
Bellman equation in dimension 5. In Section \ref{sect:reduc-var}, we show how to go from
the Bellman equation with 5 variables to a 3-variable equation. In Section
\ref{sect: permanent market impact}, we introduce permanent market impact. In Section \ref{sect: numerics}, we develop
a numerical method to compute the solution of our problem using trees of polynomial size,
and we present numerical examples.

\section{Setup of the model}
\label{sect: setup}
The problem we consider is the problem of a bank which is asked by
a firm to repurchase $Q>0$ of the firm's own shares through an accelerated
share repurchase (ASR) contract of maturity $T$. By definition of
the contract, the bank will buy the shares of the firm over a sub-period
of $[0,T]$ and can deliver them at contractually specified dates.
At time of delivery, the firm receives the shares and pays the average
of the daily VWAPs since inception for each of the $Q$ shares.

\subsection{Problem formulation}

We consider a discrete model where each period of time corresponds
to one day (of length $\delta t$). In other words, if the interval $[0,T]$ corresponds to $N$ days ($T= N\delta t$), we consider the subdivision $(t_n)_{0\le n \le N}$ with $t_n = n \delta t$ and we index variables with $n$ for the sake of simplicity.

In order to introduce random variables, we consider a filtered probability
space $\left(\Omega,\left(\mathcal{F}_{n}\right)_{0\le n\le N},\P\right)$
satisfying the usual assumptions.

To model prices, we start with an initial price $S_{0}$ and we consider
that the dynamics of the price (in absence of market impact) is given
by:
$$
\forall n\in\lbrace0,\ldots,N-1\rbrace,\quad S_{n+1}=S_{n}+\sigma\sqrt{\delta t}\epsilon_{n+1},
$$
where $\epsilon_{n+1}$ is $\mathcal{F}_{n+1}$-measurable and where
$\left(\epsilon_{n}\right)_{n}$ are assumed to be i.i.d. with
mean $0$, variance $1$, and a moment-generating function defined on $\R_+$.

To stick to the definition of the payoff of an Accelerated Share Repurchase
contract, we consider that, for $n\ge 1$, $S_{n}$ is the daily VWAP over the
period $[t_{n-1},t_n]$, this VWAP being known at time $t_n$.\footnote{We could also consider that $S_n$ is the closing price of day $n$ and then approximate the payoff of an ASR using the average of daily closing prices.}

We also introduce the process $\left(A_{n}\right){}_{n\ge1}$ that
stands for the arithmetic average of daily VWAPs over the period $[0,t_n]$:
$$
A_{n} =\frac{1}{n}\sum_{k=1}^{n}S_{k}.
$$
To buy shares, we assume that the bank sends every day an order to
be executed over the day. The size of the order
sent by the bank over $[t_{n},t_{n+1}]$ is denoted $v_{n}\delta t$ where the process $v$
is assumed to be adapted. Hence, the number of shares that remain
to be bought at time $t_n$ (denoted $q_{n}$) verifies:
$$
\left\{ \begin{array}{lcl}
q_{0} & = & Q\\
q_{n+1} & = & q_{n}-v_{n}\delta t, \quad \forall n \le N-1
\end{array}\right.
$$

The price paid by the bank for the shares bought over $[t_n,t_{n+1}]$
is assumed to be the VWAP price $S_{n+1}$ plus execution costs. To model execution costs, we introduce a function $L\in C(\R,\R_{+})$
verifying:%

\begin{itemize}
\item $L(0)=0$,
\item $L$ is an even function,
\item $L$ is increasing on $\R_{+}$,
\item $L$ is strictly convex,
\item $L$ is asymptotically superlinear, that is:
\begin{eqnarray*}
\lim_{\rho\to+\infty}\frac{L(\rho)}{\rho} & = & +\infty.
\end{eqnarray*}

\end{itemize}
The cash spent by the bank is modeled by the process $\left(X_{n}\right)_{0\le n\le N}$
defined by:
$$
\left\{ \begin{array}{lcl}
X_{0} & = & 0\\
X_{n+1} & = & X_{n}+v_{n}S_{n+1}\delta t+L\left(\dfrac{v_{n}}{V_{n+1}}\right)V_{n+1}\delta t
\end{array}\right.
$$
where $V_{n+1}\delta t$ is the market volume over the period $[t_n,t_{n+1}]$.
We assume that the process $(V_{n})_{n}$ is $\mathcal{F}_{0}$-measurable.

\begin{rem}
In applications, $L$ is often a power function, i.e. $L(\rho)=\eta\left|\rho\right|^{1+\phi}$
with $\phi>0$, or a function of the form $L(\rho)=\eta\left|\rho\right|^{1+\phi}+\psi|\rho|$
with $\phi,\psi>0$. In other words, the execution costs per share are of the form $\eta |\rho|^\phi + \psi$ where $\rho$ is the participation rate. This form of temporary market impact function is supported by the empirical study made by Almgren et al. in \cite{almgrenciti}.
\end{rem}

We then introduce a non-empty set $\mathcal{N}\subset\lbrace1,\ldots,N-1\rbrace$
corresponding to time indices at which the bank can choose whether
it delivers the shares to the firm (note that $N\notin\mathcal{N}$
because the bank has no choice at time $t_N = T$: it is forced to deliver
at terminal time $T$ if it has not done so beforehand). In practice, this set
is usually of the form $\{n_{0},\ldots,N-1\}$ where $n_{0}>1$. At
time $n^{\star}\in\mathcal{N}\cup\{N\}$ of delivery, we assume that
the shares remaining to be bought ($q_{n^{\star}}$) can be purchased
for an amount of cash equal to $q_{n^{\star}}S_{n^{\star}}+\ell(q_{n^{\star}})$
where $\ell$ is a penalty function, assumed to be convex, even, increasing
on $\R_{+}$ and verifying $\ell(0)=0$.%
\footnote{One can set $\ell$ high enough outside of $0$ to prevent delivery
whenever $q$ is different from $0$. Considering the convex indicator $\ell(q) = +\infty 1_{q\neq0}$ is also possible.}

The optimization problem the bank faces is therefore:
\begin{equation}
\sup_{(v,n^{\star})\in\mathcal{A}}\E\left[-\exp\left(-\gamma\left(QA_{n^{\star}}-X_{n^{\star}}-q_{n^{\star}}S_{n^{\star}}-\ell(q_{n^{\star}})\right)\right)\right],\label{def maximisation}
\end{equation}
where $\mathcal{A}$ is the set of admissible strategies defined as:
$$
\begin{array}{cccl}
\mathcal{A} & = & \Big\{(v,n^{\star})\ \Big| & v=(v_{n})_{0\le k\le n^{\star}-1}\text{ is }\left(\mathcal{F}\right)\text{-adapted},\\
 &  &  & n^{\star}\text{ is a }\left(\mathcal{F}\right)\text{-stopping time taking values in }\mathcal{N}\cup\{N\}\Big\}
\end{array}
$$
and where $\gamma$ is the absolute risk aversion parameter of the
bank.

Solving this problem requires to determine both the optimal execution
strategy of the bank and the optimal stopping time.

We use this utility maximization setup to define the indifference price of the ASR contract.

\begin{definition}[Price of the ASR contract]
We define the indifference price $ \Pi$ of the ASR contract as:
$$ \Pi:=\inf \Big\{ p \ \Big| \ \sup_{(v,n^{\star})\in\mathcal{A}}\E\left[-\exp\left(-\gamma\left(p+QA_{n^{\star}}-X_{n^{\star}}-q_{n^{\star}}S_{n^{\star}}-\ell(q_{n^{\star}})\right)\right)\right]\geq -1   \Big\} $$
\end{definition}

This is exactly the opposite value of the certainty equivalent associated to writing an ASR contract.

\begin{rem}
\label{oboidorman}
For the bank, the main interest of the ASR contract is its optionality
component. The bank can deliver the shares at a date it chooses, and
can hence benefit from the difference between the price of execution
and the average price since inception. For that reason, and if there
were no execution costs, the price $\Pi$ of an ASR would be negative.
An important question in practice is to know whether the gain associated
to the optionality component compensates the liquidity costs and the
risk of the contract, that is whether $\Pi$ is positive or negative.
\end{rem}

\subsection{The value function and the Bellman equation}

In this section, we introduce the value function associated to the initial problem \eqref{def maximisation} and we characterize its evolution through the associated Bellman equation. Namely, we define:%
\footnote{We will see below that $u_{0}$ does not depend on $A$. This is coherent
with the fact that $A_{n}$ is defined only for $n\ge1$.%
}
\begin{align}\label{def maximisation dynamique}
u_{n}(x,q,S,A)=  \sup_{(v,n^{\star})\in\mathcal{A}_{n}}\E\left[-\exp\left(-\gamma\left(QA_{n^{\star}}^{n,A,S}-X_{n^{\star}}^{n,x,v}-q_{n^{\star}}^{n,q,v}S_{n^{\star}}^{n,S}-\ell(q_{n^{\star}}^{n,q,v})\right)\right)\right],
\end{align}
where $\mathcal{A}_{n}$ is the set of admissible strategies at time
$t_n$ defined as:
$$
\begin{array}{cccl}
\mathcal{A}_{n} & = & \Big\{(v,n^{\star})\Big| & v=(v_{k})_{n\le k\le n^{\star}-1}\text{ is }\left(\mathcal{F}\right)\text{-adapted},\\
 &  &  & n^{\star}\text{ is a }\left(\mathcal{F}\right)\text{-stopping time taking values in }\left(\mathcal{N}\cup\{N\}\right)\cap\{n,\ldots,N\}\Big\},
\end{array}
$$
and where the state variables are defined for $0\leq n\leq k\leq N$
and $k>0$ by:
$$
\left\{ \begin{array}{lcl}
X_{k}^{n,x,v} & = & x+\Sum_{j=n}^{k-1}v_{j}S_{j+1}^{n,S}\delta t+L\left(\frac{v_{j}}{V_{j+1}}\right)V_{j+1}\delta t\\
q_{k}^{n,q,v} & = & q-\Sum_{j=n}^{k-1}v_{j}\delta t\\
S_{k}^{n,S} & = & S+\sigma\sqrt{\delta t}\Sum_{j=n}^{k-1}\epsilon_{j+1}\\
A_{k}^{n,A,S} & = & \frac{n}{k}A+\frac{1}{k}\Sum_{j=n}^{k-1}S_{j+1}^{n,S}
\end{array}\right.
$$

\begin{rem}
We can restrict the value function to $q\in[0,Q]$. It is indeed straightforward
to see that strategies involving values of $q$ outside of $[0,Q]$
are suboptimal.
\end{rem}
We now introduce the Bellman equation associated to the dynamic problem \eqref{def maximisation dynamique}. For that purpose, we introduce for $ n>N$:
\begin{align}
\tilde{u}_{n,n+1}(x,q,S,A) =  \sup_{v\in\R}\E\left[u_{n+1}\Big(X_{n+1}^{n,x,v},q_{n+1}^{n,q,v},S_{n+1}^{n,S},A_{n+1}^{n,A,S}\Big)\right].\label{eq:def-utilde}
\end{align}
We then have:

\begin{prop}
Consider the family of functions $ (u_n)_{0 \le n \le N}$ defined by \eqref{def maximisation dynamique}. Then it is the unique solution of its associated Bellman equation:
\begin{eqnarray}
u_{n}(x,q,S,A) & = & \begin{cases}
-\exp\left(-\gamma\left(QA-x-qS-\ell(q)\right)\right) & \text{if }n=N,\\
\begin{array}{rl}
\max\bigg\{ & \tilde{u}_{n,n+1}\left(x,q,S,A\right),\\
 & -\exp\left(-\gamma\left(QA-x-qS-\ell(q)\right)\right)\bigg\}
\end{array} & \text{if }n\in\mathcal{N},\\
\tilde{u}_{n,n+1}\left(x,q,S,A\right) & \text{otherwise}.
\end{cases}\label{eq:ppd-u}
\end{eqnarray}
\end{prop}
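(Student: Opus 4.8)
The plan is to prove the statement by backward induction on $n$, recognizing that \eqref{eq:ppd-u} is precisely the dynamic programming principle for the value function \eqref{def maximisation dynamique}, and that, read as a backward recursion from a prescribed terminal value, the equation admits at most one solution. I would first dispose of the terminal case: at $n=N$ the set $\mathcal{A}_N$ forces $n^\star=N$ and contains no control, so $u_N(x,q,S,A)$ coincides with the exercise payoff $-\exp(-\gamma(QA-x-qS-\ell(q)))$ directly from the definition. Before proceeding I would record that every expectation in \eqref{def maximisation dynamique} is well defined and bounded above by $0$; the integrability of the exponential terms involving $\sum\epsilon_{j+1}$ is exactly what the assumption that $\epsilon$ admits a moment-generating function on $\R_+$ guarantees, so the supremum is a genuine quantity, finite from above.

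The heart of the argument is the recursion for $n<N$, which I would establish by the two usual inequalities. For $u_n\le$ RHS, I would take an arbitrary $(v,n^\star)\in\mathcal{A}_n$ and split the expectation over the $\mathcal{F}_n$-measurable events $\{n^\star=n\}$ and $\{n^\star>n\}$. On $\{n^\star=n\}$, which can occur only when $n\in\mathcal{N}\cup\{N\}$, the realized payoff is exactly the immediate-exercise term. On $\{n^\star>n\}$ the first control $v_n$ is applied and $(v_k)_{k\ge n+1}$ together with $n^\star$ forms an element of $\mathcal{A}_{n+1}$; conditioning on $\mathcal{F}_{n+1}$ and using the Markov structure of $(X,q,S,A)$ — valid because the $\epsilon_k$ are i.i.d. and $V$ is $\mathcal{F}_0$-measurable, so the evolution after $n$ depends on $\mathcal{F}_n$ only through the current state — the conditional expectation of this branch is bounded above by $u_{n+1}$ at the time-$(n+1)$ state, hence by $\tilde{u}_{n,n+1}(x,q,S,A)$ after optimizing over $v_n$. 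Recombining the two branches gives $u_n\le\max\{\tilde u_{n,n+1},\text{exercise}\}$, with the maximum collapsing to $\tilde u_{n,n+1}$ alone when $n\notin\mathcal{N}$.

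For the reverse inequality $u_n\ge$ RHS I would exhibit near-optimal strategies. Stopping immediately at $n$, when $n\in\mathcal{N}\cup\{N\}$, is admissible and realizes exactly the exercise value, so $u_n\ge$ exercise. For the continuation value I would fix an arbitrary $v_n$ and, for each $\varepsilon>0$, splice onto it an $\varepsilon$-optimal strategy for $u_{n+1}$ started from the random state attained at time $n+1$; this concatenation lies in $\mathcal{A}_n$, and letting $\varepsilon\to 0$ yields $u_n\ge\tilde u_{n,n+1}$. Taking the maximum of the two bounds closes the induction. Uniqueness is then immediate: \eqref{eq:ppd-u} fixes $u_N$ and writes each $u_n$ as an explicit functional of $u_{n+1}$ via \eqref{eq:def-utilde}, so any solution is determined by downward recursion.

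I expect the delicate step to be the pasting argument in the reverse inequality: one must select, measurably with respect to the time-$(n+1)$ state, an $\varepsilon$-optimal continuation control together with its stopping time, and then check that the spliced pair is $\mathcal{F}$-adapted and meets the stopping-time constraint of $\mathcal{A}_n$, namely that $n^\star$ takes values in $(\mathcal{N}\cup\{N\})\cap\{n,\ldots,N\}$. This is where a measurable-selection (or regular-conditional-distribution) argument is required, and where the Markovian reduction of the conditional law of the future must be invoked with care.
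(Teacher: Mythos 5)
Your plan is correct, but there is nothing in the paper to compare it against: the authors do not prove this proposition at all — they simply refer to Bellman's monograph \cite{bell}. What you propose is exactly the standard argument that this citation stands for: backward induction, the split of an arbitrary admissible pair $(v,n^{\star})$ over the $\mathcal{F}_n$-measurable events $\{n^{\star}=n\}$ and $\{n^{\star}>n\}$ for the inequality $u_n\le$ RHS, pasting of near-optimal continuations for $u_n\ge$ RHS, and uniqueness by downward recursion from the prescribed terminal value. The two points you single out are indeed the only delicate ones, and your diagnosis of them is accurate. First, bounding the conditional expectation given $\mathcal{F}_{n+1}$ by $u_{n+1}$ at the realized state requires the observation that adapted continuations, which may depend on past noise beyond the current state, cannot beat the supremum defining $u_{n+1}$; this follows from the independence of $(\epsilon_k)_{k\ge n+2}$ from $\mathcal{F}_{n+1}$, since conditioning turns such a continuation into a randomized strategy for the problem started at the realized state, and randomization cannot increase a supremum. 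Second, the splicing step needs a measurable selection of $\varepsilon$-optimal continuation controls and stopping times as functions of the time-$(n+1)$ state; in this discrete-time setting this is covered by standard selection theorems, and it is worth noting that the paper effectively sidesteps the issue in its subsequent analysis by working with the explicit scalar recursion for $(\theta_n)_n$ in Proposition \ref{prop:cdv}, whose finiteness — hence the well-posedness of the expectations you invoke via the moment-generating-function assumption — is established in Appendix A. So your proposal correctly supplies the proof that the paper outsources to \cite{bell}.
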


We refer to \cite{bell} for a proof of this result.
\begin{rem}
The initial time plays a specific role here, as for $n=0$,
$$
A_{n+1}^{n,A,S}=A_{1}^{0,A,S}=S_{1}^{0,S}=S+\sigma\sqrt{\delta t}\epsilon_{1}.
$$
Subsequently, $u_{0}$ does not depend on $A$ and writes:
$$
u_{0}(x,q,S)  = \sup_{v\in\R}\E\left[ u_{1}\Big(X_{1}^{0,x,v},q_{1}^{0,q,v},S_{1}^{0,S},S_{1}^{0,S}\Big)\right].
$$
\end{rem}

\section{Reduction to a 3-variable equation}
\label{sect:reduc-var}
\subsection{Change of variables}
\label{sect:change of var}
In order to reduce the dimensionality of the problem, we introduce
in Proposition \ref{prop:cdv} a change of variables that consists
mainly in writing the problem in terms of the (normalized) spread $Z:=\frac{S-A}{\sigma \sqrt{\delta t}}$.

Before that, let us start with a Lemma.
\begin{lem}
\label{lem:y}For $1\le n\le k\le N$, if $Y_{k}:=Y_{k}^{n,X,A,S,v}$
and $Y$ are defined as
$$
\begin{array}{ccclclcl}
Y_{k} & = & - & QA_{k}^{n,A,S} & + & X_{k}^{n,x,v} & + & q_{k}^{n,q,v}S_{k}^{n,S},\\
Y & = & - & QA & + & x & + & qS,
\end{array}
$$
then we have:
\begin{eqnarray*}
Y_{k}-Y & = & \sigma\sqrt{\delta t}\left(\sum_{j=n}^{k-1}\left(q_{j}-\left(1-\frac{j}{k}\right)Q\right)\epsilon_{j+1}-\left(1-\frac{n}{k}\right)Q\dfrac{S-A}{\sigma\sqrt{\delta t}}\right)+\sum_{j=n}^{k-1}L\left(\frac{v_{j}}{V_{j+1}}\right)V_{j+1}\delta t.
\end{eqnarray*}
\end{lem}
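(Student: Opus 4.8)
The plan is to establish the identity by pure substitution of the explicit expressions for $X_{k}^{n,x,v}$, $q_{k}^{n,q,v}$, $S_{k}^{n,S}$ and $A_{k}^{n,A,S}$ into the definition of $Y_{k}$, followed by a reorganisation of the resulting sums. It is convenient to isolate the noise by writing $S_{j+1}^{n,S}=S+\sigma\sqrt{\delta t}\,E_{j+1}$ with $E_{j+1}:=\sum_{i=n}^{j}\epsilon_{i+1}$ (so $E_{n}=0$ and $S_{k}^{n,S}=S+\sigma\sqrt{\delta t}\,E_{k}$). The execution-cost contribution $\sum_{j=n}^{k-1}L(v_{j}/V_{j+1})V_{j+1}\delta t$ enters only through $X_{k}^{n,x,v}$ and reappears verbatim on the right-hand side, so I would set it aside at once and work throughout with the remaining, cost-free, terms, keeping the $\epsilon$-dependent and $\epsilon$-free parts separate.

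First I would split $Y_{k}-Y$ into the three blocks $-Q(A_{k}^{n,A,S}-A)$, the cost-free part of $X_{k}^{n,x,v}-x$, and $q_{k}^{n,q,v}S_{k}^{n,S}-qS$. The averaging block yields $-Q(1-\tfrac{n}{k})(S-A)-\tfrac{Q\sigma\sqrt{\delta t}}{k}\sum_{j=n}^{k-1}E_{j+1}$; the first summand is exactly the spread term once one writes $S-A=\sigma\sqrt{\delta t}\cdot\tfrac{S-A}{\sigma\sqrt{\delta t}}$. For the double sum I would swap the order of summation, using $\sum_{j=n}^{k-1}E_{j+1}=\sum_{j=n}^{k-1}(k-j)\epsilon_{j+1}$, which turns it into $-Q\sigma\sqrt{\delta t}\sum_{j=n}^{k-1}(1-\tfrac{j}{k})\epsilon_{j+1}$ and thereby produces the $-(1-\tfrac{j}{k})Q$ part of the target coefficient.

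Next I would combine the cash and inventory blocks. Expanding both, the deterministic terms $\pm S\delta t\sum_{j=n}^{k-1}v_{j}$ cancel, and collapsing the boundary contribution via $q-\delta t\sum_{j=n}^{k-1}v_{j}=q_{k}^{n,q,v}$ leaves, after factoring $\sigma\sqrt{\delta t}$, the bracket $\delta t\sum_{j=n}^{k-1}v_{j}E_{j+1}+q_{k}^{n,q,v}E_{k}$. The crux — and the step I expect to be most delicate — is the Abel summation that identifies this bracket with $\sum_{j=n}^{k-1}q_{j}\epsilon_{j+1}$: swapping summation order in the first sum makes the coefficient of $\epsilon_{j+1}$ equal to $q_{k}^{n,q,v}+\delta t\sum_{i=j}^{k-1}v_{i}$, which is precisely $q_{j}$ by the recursion $q_{j}=q-\delta t\sum_{i=n}^{j-1}v_{i}$. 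Collecting this with the averaging coefficient from the previous paragraph gives the coefficient $q_{j}-(1-\tfrac{j}{k})Q$, and restoring the set-aside cost term yields the claimed formula. The only real care required is in the bookkeeping of the summation ranges when swapping orders and in tracking which terms are $\epsilon$-free.
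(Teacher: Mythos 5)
Your proof is correct: the double-sum swap in the averaging block, the cancellation of the deterministic terms $\pm S\delta t\sum_{j=n}^{k-1}v_{j}$, and the coefficient identification $q_{k}^{n,q,v}+\delta t\sum_{i=j}^{k-1}v_{i}=q_{j}$ all check out, and together they reassemble the claimed formula, with the execution-cost sum correctly passing through untouched. Your route differs from the paper's in how the summation by parts is executed. The paper merges the inventory term with the cash account from the outset and uses $v_{j}\delta t=q_{j}-q_{j+1}$ to telescope
$$
\sum_{j=n}^{k-1}\left(q_{j}-q_{j+1}\right)S_{j+1}+q_{k}S_{k}-qS=\sum_{j=n}^{k-1}q_{j}\left(S_{j+1}-S_{j}\right)=\sigma\sqrt{\delta t}\sum_{j=n}^{k-1}q_{j}\epsilon_{j+1},
$$
so the price increments appear directly and no cumulative-noise variable is ever introduced; you instead expand price levels as $S+\sigma\sqrt{\delta t}E_{j+1}$ and transpose the double sums to collect the coefficient of each $\epsilon_{j+1}$ --- the same Abel summation carried out in the dual direction (your identity $\delta t\sum_{j}v_{j}E_{j+1}+q_{k}E_{k}=\sum_{j}q_{j}\epsilon_{j+1}$ is exactly the paper's telescoped identity after substituting the expansion of $S_{j+1}$). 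The treatments of the averaging block are identical in both. What the paper's variant buys is brevity: the deterministic cancellation you must track by hand is automatic once the cash and inventory blocks are merged, and the whole lemma fits in three lines. What yours buys is transparency: it exhibits explicitly which state variable contributes what to the coefficient of each innovation $\epsilon_{j+1}$, which makes the interpretation of $q_{j}-\left(1-\frac{j}{k}\right)Q$ as a residual exposure immediate.
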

\begin{proof}
We omit all the superscripts for the sake of readability. On the one hand, we have:
\begin{align*}
A_{k}-A &=  \left(\frac{n}{k}A+\frac{1}{k}\sum_{j=n}^{k-1}S_{j+1}\right)-A = \left(1-\frac{n}{k}\right)\left(S-A\right)+\sigma\sqrt{\delta t}\sum_{j=n}^{k-1}\left(1-\frac{j}{k}\right)\epsilon_{j+1}.
\end{align*}
On the other hand, we have:
\begin{align*}
X_{k}-x+q_{k}S_{k}-qS &= \sum_{j=n}^{k-1}\left(q_{j}-q_{j+1}\right)S_{j+1}+\sum_{j=n}^{k-1}L\left(\frac{v_{j}}{V_{j+1}}\right)V_{j+1}\delta t+q_{k}S_{k}-qS\\
&=\sum_{j=n}^{k-1}q_{j}\left(S_{j+1}-S_{j}\right)+\sum_{j=n}^{k-1}L\left(\frac{v_{j}}{V_{j+1}}\right)V_{j+1}\delta t\\
 &=\sum_{j=n}^{k-1}q_{j}\sigma\sqrt{\delta t}\epsilon_{j+1}+\sum_{j=n}^{k-1}L\left(\frac{v_{j}}{V_{j+1}}\right)V_{j+1}\delta t.
\end{align*}
Combining  these equations we obtain the result.
\end{proof}

We now introduce the key change of variables for the next Proposition. For $ n\geq 1$, $q\in[0,Q]$, and $ Z \in \mathbb{R}$, we define:
\begin{align}
\theta_{n}(q,Z) = & \inf_{(v,n^{\star})\in\mathcal{A}_{n}}\logexpo\sigma\sqrt{\delta t}\left(\sum_{j=n}^{n^{\star}-1}\left(q_{j}-\left(1-\frac{j}{n^{\star}}\right)Q\right)\epsilon_{j+1}\right. \nonumber \\
& \left. -\left(1-\frac{n}{n^{\star}}\right)QZ\right) +\sum_{j=n}^{n^{\star}-1}L\left(\frac{v_{j}}{V_{j+1}}\right)V_{j+1}\delta t+\ell(q_{n^{\star}}^{n,q,v})\logexpc ,\label{eq:def-theta}
\end{align}
and
\begin{align}
\tilde{\theta}_{n,n+1}\left(q,Z\right) =  & \inf_{v\in\R}\logexpo\sigma\sqrt{\delta t}\left(\left(q-\dfrac{Q}{n+1}\right)\epsilon_{n+1}-\dfrac{Q}{n+1}Z\right)\nonumber \\
 & +L\left(\frac{v}{V_{n+1}}\right)V_{n+1}\delta t+\theta_{n+1}\left(q-v\delta t,\dfrac{n}{n+1}\left(Z+\epsilon_{n+1}\right)\right)\logexpc.\label{eq:def-thetatilde}
\end{align}
The result is then the following:
\begin{prop}
\label{prop:cdv}For $n\geq 1$, we have the following characterization of problem (\ref{def maximisation dynamique}):
\\
{\rm (i)} $u_{n}(x,q,S,A)$ verifies
\begin{eqnarray}
u_{n}(x,q,S,A) & = & -\exp\left(-\gamma\left(-Y-\theta_{n}\left(q,\dfrac{S-A}{\sigma\sqrt{\delta t}}\right)\right)\right).\label{eq:cdv}
\end{eqnarray}
\\
{\rm (ii)}
The Bellman equation satisfied by $\theta_{n}$ is:
\begin{eqnarray}
\theta_{n}(q,Z) & = & \begin{cases}
\ell(q) & \text{if }n=N,\\
\min\bigg\{\tilde{\theta}_{n,n+1}(q,Z),\ell(q)\bigg\} & \text{if }n\in\mathcal{N},\\
\tilde{\theta}_{n,n+1}(q,Z) & \text{otherwise}.
\end{cases}\label{eq:bellman-theta}
\end{eqnarray}
\end{prop}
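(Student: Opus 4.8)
The plan is to derive (i) directly from the definition \eqref{def maximisation dynamique} together with Lemma \ref{lem:y}, and then to obtain the Bellman equation (ii) by inserting the representation (i) into the already-established Bellman equation \eqref{eq:ppd-u} for $u$. First I would note that, writing $Y_{n^\star}$ for the quantity $Y_k$ of Lemma \ref{lem:y} evaluated at $k=n^\star$, the payoff inside the expectation in \eqref{def maximisation dynamique} reads $-\exp\left(\gamma\left(Y_{n^\star}+\ell(q_{n^\star}^{n,q,v})\right)\right)$, since $QA_{n^\star}^{n,A,S}-X_{n^\star}^{n,x,v}-q_{n^\star}^{n,q,v}S_{n^\star}^{n,S}=-Y_{n^\star}$ by definition. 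As $Y=-QA+x+qS$ is deterministic given the initial data, I would factor $\exp(\gamma Y)$ out of the expectation and use $\sup(-f)=-\inf f$ to rewrite
$$u_n(x,q,S,A)=-\exp(\gamma Y)\inf_{(v,n^\star)\in\mathcal{A}_n}\E\left[\exp\left(\gamma\left(Y_{n^\star}-Y+\ell(q_{n^\star}^{n,q,v})\right)\right)\right].$$
Lemma \ref{lem:y} identifies $Y_{n^\star}-Y+\ell(q_{n^\star}^{n,q,v})$ with exactly the random variable inside the bracket of \eqref{eq:def-theta} at $Z=(S-A)/(\sigma\sqrt{\delta t})$; and since $x\mapsto\frac1\gamma\log x$ is increasing, the infimum commutes with it, so the infimum above equals $\exp(\gamma\theta_n(q,Z))$. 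This yields $u_n=-\exp(\gamma(Y+\theta_n(q,Z)))=-\exp(-\gamma(-Y-\theta_n(q,Z)))$, which is \eqref{eq:cdv}.

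For (ii), the central intermediate claim is that the continuation operator $\tilde u_{n,n+1}$ of \eqref{eq:def-utilde} inherits the same exponential form, with $\theta$ replaced by $\tilde\theta_{n,n+1}$. To see this I would first track the one-step dynamics of the normalized spread: from $A_{n+1}^{n,A,S}=\frac{n}{n+1}A+\frac1{n+1}S_{n+1}^{n,S}$ and $S_{n+1}^{n,S}=S+\sigma\sqrt{\delta t}\,\epsilon_{n+1}$ one obtains $S_{n+1}^{n,S}-A_{n+1}^{n,A,S}=\frac{n}{n+1}\left(S-A+\sigma\sqrt{\delta t}\,\epsilon_{n+1}\right)$, hence the new spread is $\frac{n}{n+1}(Z+\epsilon_{n+1})$, precisely the third argument of $\theta_{n+1}$ in \eqref{eq:def-thetatilde}. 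Specializing Lemma \ref{lem:y} to $k=n+1$ (so $q_n=q$ and $1-\frac{n}{n+1}=\frac1{n+1}$) gives $Y_{n+1}-Y=\sigma\sqrt{\delta t}\left((q-\frac{Q}{n+1})\epsilon_{n+1}-\frac{Q}{n+1}Z\right)+L\left(\frac{v}{V_{n+1}}\right)V_{n+1}\delta t$. Substituting the representation (i) for $u_{n+1}$ into \eqref{eq:def-utilde}, factoring out $\exp(\gamma Y)$, and converting the supremum into an infimum as before, I would recognize the resulting infimum over $v$ as $\exp(\gamma\tilde\theta_{n,n+1}(q,Z))$, giving $\tilde u_{n,n+1}(x,q,S,A)=-\exp(-\gamma(-Y-\tilde\theta_{n,n+1}(q,Z)))$.

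Finally I would substitute these representations into the three cases of \eqref{eq:ppd-u}. The terminal case $n=N$ forces $\theta_N(q,Z)=\ell(q)$ because $QA-x-qS=-Y$. The decisive observation for the remaining cases is that $\theta\mapsto-\exp(-\gamma(-Y-\theta))=-\exp(\gamma(Y+\theta))$ is strictly decreasing in $\theta$ (as $\gamma>0$); hence the maximization between continuation and exercise for $n\in\mathcal{N}$ becomes a minimization of the corresponding $\theta$-values, producing $\theta_n=\min\{\tilde\theta_{n,n+1},\ell(q)\}$, while the pure-continuation case gives $\theta_n=\tilde\theta_{n,n+1}$, which is exactly \eqref{eq:bellman-theta}. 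I expect the main obstacle to lie in the careful sign and monotonicity bookkeeping of this exponential change of variables: justifying the interchanges of the supremum with $-\exp$ and with $\frac1\gamma\log$, and checking that the maximization over the exercise decision in $u$ indeed reverses into a minimization in $\theta$. By contrast, the one-step spread update $\frac{n}{n+1}(Z+\epsilon_{n+1})$ and the specialization of Lemma \ref{lem:y} to $k=n+1$ are the only genuinely new algebraic inputs, the rest being a monotone relabeling.
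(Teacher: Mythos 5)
Your proposal is correct and follows essentially the same route as the paper's own proof: both establish (i) by factoring out $\exp(\gamma Y)$, applying Lemma \ref{lem:y}, and using the monotonicity of the logarithm, and both obtain (ii) by deriving the analogous exponential representation for $\tilde{u}_{n,n+1}$ (via the one-step spread update $\frac{n}{n+1}(Z+\epsilon_{n+1})$ and Lemma \ref{lem:y} at $k=n+1$) and then substituting into \eqref{eq:ppd-u}, where the strictly decreasing map $\theta\mapsto-\exp\left(-\gamma\left(-Y-\theta\right)\right)$ turns the max into a min. The only ingredient the paper adds beyond your bookkeeping is the finiteness of the functions $(\theta_n)_n$ (proved in its Appendix A), which is invoked to legitimize reading \eqref{eq:bellman-theta} off from \eqref{eq:ppd-u}.
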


\begin{proof}
For (i), let us first notice that:
\begin{align*}
  \frac{1}{\gamma}\log\big(-u_{n}(x,q,S,A)\big)-Y & =\inf_{(v,n^{\star})\in\mathcal{A}}\logexpo Y_{n^{\star}}-Y+\ell(q_{n^{\star}}^{n,q,v})\logexpc.
\end{align*}
Using Lemma \ref{lem:y}, we compute the difference $Y_{n^{\star}}-Y$:
\begin{align*}
Y_{n^{\star}}-Y = \sigma\sqrt{\delta t}\left(\sum_{j=n}^{n^{\star}-1}\left(q_{j}-\frac{n^{\star}-j}{n^{\star}}Q\right)\epsilon_{j+1}-\frac{n^\star-n}{n^{\star}}Q\dfrac{S-A}{\sigma\sqrt{\delta t}}\right) +\sum_{j=n}^{n^{\star}-1}L\left(\frac{v_{j}}{V_{j+1}}\right)V_{j+1}\delta t.
\end{align*}
Then
\begin{eqnarray*}
 &  & \frac{1}{\gamma}\log\left(-u_{n}(x,q,S,A)\right)-Y\\
 & = & \inf_{(v,n^{\star})\in\mathcal{A}}\logexpo\sigma\sqrt{\delta t}\left(\sum_{j=n}^{n^{\star}-1}\left(q_{j}-\left(1-\frac{j}{n^{\star}}\right)Q\right)\epsilon_{j+1}-\left(1-\frac{n}{n^{\star}}\right)Q\dfrac{S-A}{\sigma\sqrt{\delta t}}\right)\\
 &  & +\sum_{j=n}^{n^{\star}-1}L\left(\frac{v_{j}}{V_{j+1}}\right)V_{j+1}\delta t+\ell(q_{n^{\star}}^{n,q,v})\logexpc.
\end{eqnarray*}
Using the definition of $\theta_{n}$, we obtain $u_{n}(x,q,S,A) = -\exp\left(-\gamma\left(-Y-\theta_{n}\left(q,\dfrac{S-A}{\sigma\sqrt{\delta t}}\right)\right)\right).$\\

To prove (ii), we first show that $\tilde{\theta}_{n,n+1}$ defined in (\ref{eq:def-thetatilde})
satisfies the following equation:
\begin{align}
\tilde{u}_{n,n+1}(x,q,S,A) = -\exp\left(-\gamma\left(-Y-\tilde{\theta}_{n,n+1}\left(q,\dfrac{S-A}{\sigma\sqrt{\delta t}}\right)\right)\right).\label{eq:cdvtilde}
\end{align}
For that purpose, we notice that:
\begin{eqnarray*}
I & =  & \frac{1}{\gamma}\log\left(-\tilde{u}_{n,n+1}(x,q,S,A)\right)-Y\\
 & = & \inf_{v\in\R}\logexpo Y_{n+1}-Y+\theta_{n+1}\left(q-v\delta t,\dfrac{S_{n+1}-A_{n+1}}{\sigma\sqrt{\delta t}}\right)\logexpc.
\end{eqnarray*}
The difference $Y_{n+1}-Y$ can be computed using Lemma \ref{lem:y}
as above:
$$
Y_{n+1}-Y = \sigma\sqrt{\delta t}\left(\left(q-\dfrac{Q}{n+1}\right)\epsilon_{n+1}-\dfrac{Q}{n+1}\dfrac{S-A}{\sigma\sqrt{\delta t}}\right)+L\left(\frac{v_{n}}{V_{n+1}}\right)V_{n+1}\delta t.
$$
The difference $S_{n+1}-A_{n+1}$ can be computed directly:
$$
 S_{n+1}-A_{n+1} = S_{n+1}-\left(\frac{n}{n+1}A+\frac{1}{n+1}S_{n+1}\right)
 = \frac{n}{n+1}\sigma\sqrt{\delta t}\left(\dfrac{S-A}{\sigma\sqrt{\delta t}}+\epsilon_{n+1}\right).
$$
Therefore, we have
\begin{eqnarray*}
I & = & \inf_{v\in\R}\logexpo\sigma\sqrt{\delta t}\left(\left(q-\dfrac{Q}{n+1}\right)\epsilon_{n+1}-\dfrac{Q}{n+1}\dfrac{S-A}{\sigma\sqrt{\delta t}}\right)\\
 &  & +L\left(\frac{v_{n}}{V_{n+1}}\right)V_{n+1}\delta t+\theta_{n+1}\left(q-v\delta t,\frac{n}{n+1}\left(\dfrac{S-A}{\sigma\sqrt{\delta t}}+\epsilon_{n+1}\right)\right)\logexpc.
\end{eqnarray*}
This gives (\ref{eq:cdvtilde}). Finally, we plug the equations (\ref{eq:cdv}) and (\ref{eq:cdvtilde})
into (\ref{eq:ppd-u}) to obtain (\ref{eq:bellman-theta}).\\
The proof that $ (\theta_n)$ is bounded, allowing us to compute directly this dynamic programming equation from \eqref{eq:ppd-u}, is in Appendix A.
\end{proof}

\subsection{Rationale behind the change of variable and comparison with \cite{jaimungal}}
Proposition \ref{prop:cdv} states that the problem to be solved is in fact
of dimension 3, the three dimensions being the time, the inventory
$q$, and the spread $Z=\frac{S-A}{\sigma\sqrt{\delta t}}$. In particular, the buying behavior of the bank depends on $S$ and $A$ only through $Z$.\\

It is interesting to notice the link between this result and the one obtained by Jaimungal \emph{et al.} in \cite{jaimungal}. In a different framework, the optimal strategy obtained in \cite{jaimungal} only depends indeed on the ratio $\frac{S}{A}$. These results are important as they permit to reduce the number of variables, leading to a significant increase of the speed of numerical methods. In both cases, the relevant variable is the relative position of the price $S$ with respect to the average price $A$. This is in line with the intuition: if $S$ goes below $A$, then there is an incentive to buy shares as the gain in the absence of liquidity costs would be $A-S$. Finally, we highlight that their change of variables is a ratio and not a difference as in our framework. This comes from their choice of lognormal price returns. They consider indeed a continuous time framework with a geometric Brownian motion for $S$, while we chose a discrete time framework with independent and normally distributed price increments.\\

To find the optimal liquidation strategy and the optimal stopping time, we
need to solve a recursive equation to compute $(q,Z)\mapsto(\theta_{n}(q,Z))_{n}$.
The definition of $\theta_{n}$ deserves a few comments. It is made
of three parts:
\begin{align}
\label{effects}
\theta_{n}(q,Z) = &\inf_{(v,n^{\star})\in\mathcal{A}_{n}}\logexpo\sigma\sqrt{\delta t}\Biggl(\underbrace{\sum_{j=n}^{n^{\star}-1}\left(q_{j}-\left(1-\frac{j}{n^{\star}}\right)Q\right)\epsilon_{j+1}}_{\text{risk\ term}}-\underbrace{\left(1-\frac{n}{n^{\star}}\right)QZ}_{Z\text{\ term}}\Biggr)\nonumber \\
 & +\underbrace{\sum_{j=n}^{n^{\star}-1}L\left(\frac{v_{j}}{V_{j+1}}\right)V_{j+1}\delta t+\ell(q_{n^{\star}}^{n,q,v})}_{\text{liquidity\ term}}\logexpc.
\end{align}
The risk term corresponds to the risk associated to the payoff. If
$n^{\star}$ was fixed, this risk could be hedged perfectly, by buying
the shares evenly until time $t_{n^{\star}} = n^\star \delta t$. However, $n^{\star}$
is a stopping time and it is not known ex-ante. Practically, this
means that, to hedge the risk associated to the payoff, the strategy
depends (roughly) on the targeted value of $n^{\star}$, a targeted
value that changes according to the evolution of $Z$: when $S$ is
below $A$, we expect to end the process early to benefit from the
difference between $A$ and $S$, whereas we expect $n^{\star}$ to
be large if $S$ is greater then $A$.\\
To understand the $Z$ term, it is worth recalling that the certainty equivalent of the bank at time $t_n$ is $QA_n -X_n -q_nS_n - \theta_{n}\left(q_n,\dfrac{S_n-A_n}{\sigma\sqrt{\delta t}}\right)$ (this is equation (\ref{eq:cdv})). In other words, since the bank would earn $QA_n -X_n -q_nS_n$ if it were able to buy $q_n$ shares now and to deliver (and exercise the option) instantaneously, the function $\theta_n$ represents the cost linked to the limited available liquidity. For negative values of $Z$, the $Z$ term quantifies the fact that because of the limited available liquidity, and because $Z$ is mean-reverting around $0$, the bank will not be able to benefit entirely from the current difference between $S$ and $A$. $\theta_n$ is then a decreasing function of $Z$ and the further the targeted time of delivery $n^\star$, the more $\theta_n$ is decreasing in $Z$, as $Z$ will have time to return to $0$. Similarly, the fact that $\theta_n$ is decreasing in $Z$ when $Z$ is positive is straightforward as the situation will naturally improve for positive values of $Z$, because of the mean-reversion phenomenon.\\
The last two terms correspond to (running) execution costs and liquidity costs at final time.

\subsection{Optimal strategy and pricing of ASR contracts}
\label{sect: Optimal strategy}

In this section we insist on the price of the ASR contract and how to optimally hedge it with the use of Proposition \ref{prop:cdv}.

Let us come back to our initial problem. The value
function of the bank at inception is
\begin{align*}
 u_0(0,Q,S_0) &=\sup_{v\in\R } \E \Big[ u_1 \Big( X_{1}^{0,0,v},q_{1}^{0,Q,v},S_{1}^{0,S_{0}},S_{1}^{0,S_{0}} \Big)\Big]\\
 & = \sup_{v\in\R} \E\Big[-\exp\left(-\gamma\left(QS_{1}^{0,S_{0}}-X_{1}^{0,0,v}-QS_{1}^{0,S_{0}}-\theta_{1}\left(Q-v\delta t,0\right)\right)\right) \Big]\\
  & = \sup_{v\in\R}\E\left[-\exp\left(-\gamma\left(-L\left(\dfrac{v}{V_{1}}\right)V_{1}\delta t-\theta_{1}\left(Q-v\delta t,0\right)\right)\right)\right].
\end{align*}
Hence, the value function of the bank at time $0$ does not depend
on the price.

This allows us to obtain the following formulation for the indifference price of the ASR contract from $ \theta_1$:
\begin{prop}
The indifference price of an ASR contract is given by:
$$ \Pi=\frac{1}{\gamma} \log\left(-u_0(0,Q,S_0) \right)=\inf_{v\in\R}\left\{ L\left(\dfrac{v}{V_{1}}\right)V_{1}\delta t+\theta_{1}\left(Q-v\delta t,0\right)\right\}.$$
\end{prop}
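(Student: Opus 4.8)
The plan is to chain together three facts already established: the definition of the indifference price $\Pi$, the reduction formula (\ref{eq:cdv}) from Proposition \ref{prop:cdv}, and the explicit expression for $u_0(0,Q,S_0)$ derived just above the statement. The key observation is that the exponential utility structure makes the certainty equivalent an affine shift, so the indifference price is nothing but the certainty equivalent at inception, which in turn is read off directly from $\theta_1$.

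First I would start from the definition of $\Pi$. For exponential utility, adding a constant $p$ inside the payoff factors out of the expectation as $\exp(-\gamma p)$. Concretely, for every admissible $(v,n^\star)$,
\begin{align*}
\E\left[-\exp\left(-\gamma\left(p+QA_{n^{\star}}-X_{n^{\star}}-q_{n^{\star}}S_{n^{\star}}-\ell(q_{n^{\star}})\right)\right)\right] = e^{-\gamma p}\,\E\left[-\exp\left(-\gamma\left(QA_{n^{\star}}-X_{n^{\star}}-q_{n^{\star}}S_{n^{\star}}-\ell(q_{n^{\star}})\right)\right)\right].
\end{align*}
Since $e^{-\gamma p}>0$, the supremum over $(v,n^\star)\in\mathcal{A}$ of the left-hand side equals $e^{-\gamma p}\,u_0(0,Q,S_0)$, using the definition (\ref{def maximisation}) of the value function at inception. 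The defining inequality for the price, namely that the supremum be $\geq -1$, therefore reads $e^{-\gamma p}\,u_0(0,Q,S_0)\geq -1$.

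Next I would solve this inequality for $p$. Because $u_0(0,Q,S_0)<0$ (it is a supremum of strictly negative quantities), the condition $e^{-\gamma p}\,u_0(0,Q,S_0)\geq -1$ is equivalent to $e^{-\gamma p}\leq \frac{1}{-u_0(0,Q,S_0)}$, i.e. to $-\gamma p \leq -\log\left(-u_0(0,Q,S_0)\right)$, that is $p\geq \frac{1}{\gamma}\log\left(-u_0(0,Q,S_0)\right)$. Hence the infimum over admissible $p$ in the definition of $\Pi$ is attained at the boundary, giving $\Pi=\frac{1}{\gamma}\log\left(-u_0(0,Q,S_0)\right)$, which is the first equality in the statement.

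Finally I would substitute the explicit inception value function. The displayed computation preceding the Proposition gives
\begin{align*}
u_0(0,Q,S_0)=\sup_{v\in\R}\E\left[-\exp\left(-\gamma\left(-L\left(\dfrac{v}{V_{1}}\right)V_{1}\delta t-\theta_{1}\left(Q-v\delta t,0\right)\right)\right)\right].
\end{align*}
Since the argument inside the expectation is $\mathcal{F}_0$-measurable (both $V_1$ and $\theta_1(Q-v\delta t,0)$ are deterministic given the $\mathcal{F}_0$-measurability of the volume process, and $\theta_1$ evaluated at $Z=0$ carries no remaining randomness beyond what its own infimum integrates out), the expectation collapses and the outer $\sup$ over $v$ of $-\exp(\cdot)$ becomes an $\inf$ over $v$ inside the exponent. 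Taking $\frac{1}{\gamma}\log(-\cdot)$ then yields $\Pi=\inf_{v\in\R}\left\{L\left(\frac{v}{V_{1}}\right)V_{1}\delta t+\theta_{1}(Q-v\delta t,0)\right\}$, completing the proof. The only genuine subtlety, and the step I would treat most carefully, is justifying that the outer expectation truly disappears: this rests on the $\mathcal{F}_0$-measurability of $(V_n)_n$ assumed in the setup and on the fact that $\theta_1(\cdot,0)$ already incorporates, through its own defining infimum over $\mathcal{A}_1$, all the conditional expectation over future noise, so that nothing stochastic remains at the outermost level.
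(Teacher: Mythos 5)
Your proposal is correct and takes essentially the same route as the paper: the paper's own justification is exactly the displayed computation of $u_0(0,Q,S_0)$ (where the expectation collapses because the $S_1$ terms cancel and $\theta_1$ is evaluated at the deterministic point $Z=0$) combined with unwinding the definition of the indifference price, and you have merely made explicit the factoring out of $e^{-\gamma p}$ and the resolution of the inequality in $p$, steps the paper leaves implicit. The only cosmetic imprecision is your justification that $u_0(0,Q,S_0)<0$ (``a supremum of strictly negative quantities'' could in principle be zero); strict negativity actually follows from the finiteness of $\inf_v\left\{ L\left(v/V_1\right)V_1\delta t+\theta_1\left(Q-v\delta t,0\right)\right\}$, which the paper secures via the lower bound on $\theta_n$ in Appendix A.
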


Pricing an ASR contract therefore relies on solving the recursive equations (\ref{eq:bellman-theta}) for $(\theta_{n})_n$.

\begin{rem}
We can generalize the above definition of $\theta_{n}$ to the case
$n=0$. We then see straightforwardly that $\Pi=\theta_{0}(Q,0)$.
\end{rem}

Associated to this indifference price, the functions $(\theta_{n})_{n}$
and the recursive equations (\ref{eq:bellman-theta}) permit to find
an optimal strategy. We describe the optimal strategy in the Proposition below:
\begin{prop}
At time $0$, an optimal strategy consists in sending an order of
size $v^{*}\delta t$ where
$$
v^{*}\in\argmin_{v\in\R}L\left(\dfrac{v}{V_{1}}\right)V_{1}\delta t+\theta_{1}\left(Q-v\delta t,0\right).
$$
At an intermediate time $t_n<T$, if the bank has not already
delivered the shares, then, denoting $Z_{n}=\frac{S_{n}-A_{n}}{\sigma\sqrt{\delta t}}$,
an optimal strategy consists in the following:
\begin{itemize}
\item If $n\not\in\mathcal{N}$, send an order of size $v^{*}\delta t$
where
\begin{eqnarray*}
v^{*} & \in & \argmin_{v\in\R}\logexpo\sigma\sqrt{\delta t}\left(\left(q_{n}-\dfrac{Q}{n+1}\right)\epsilon_{n+1}-\dfrac{Q}{n+1}Z_{n}\right)\\
 &  & +L\left(\frac{v}{V_{n+1}}\right)V_{n+1}\delta t+\theta_{n+1}\left(q-v\delta t,\dfrac{n}{n+1}\left(Z_{n}+\epsilon_{n+1}\right)\right)\logexpc.
\end{eqnarray*}
\item If $n\in\mathcal{N}$, compare $\tilde{\theta}_{n,n+1}(q_{n},Z_{n})$
and $\ell(q_{n})$.
\begin{itemize}
\item If $\tilde{\theta}_{n,n+1}(q_{n},Z_{n})<\ell(q_{n})$, then send an
order of size $v^{*}\delta t$ where
\begin{eqnarray*}
v^{*} & \in & \argmin_{v\in\R}\logexpo\sigma\sqrt{\delta t}\left(\left(q_{n}-\dfrac{Q}{n+1}\right)\epsilon_{n+1}-\dfrac{Q}{n+1}Z_{n}\right)\\
 &  & +L\left(\frac{v}{V_{n+1}}\right)V_{n+1}\delta t+\theta_{n+1}\left(q-v\delta t,\dfrac{n}{n+1}\left(Z_{n}+\epsilon_{n+1}\right)\right)\logexpc.
\end{eqnarray*}
\item If $\ell(q_{n})\le\tilde{\theta}_{n,n+1}(q_{n},Z_{n})$, then deliver
the shares after the remaining ones have been bought (supposedly instantaneously
in the model, at price $S_nq_n + \ell(q_n)$).
\end{itemize}
\end{itemize}
\end{prop}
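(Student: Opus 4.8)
The plan is to obtain the proposition as a verification result built on the dynamic programming principle already recorded in the Bellman equations \eqref{eq:ppd-u} and \eqref{eq:bellman-theta}; the only genuinely new ingredients are the existence of a pointwise minimizer in the definition \eqref{eq:def-thetatilde} of $\tilde{\theta}_{n,n+1}$ and a measurable selection of it. First I would pass to the reduced coordinates. By \eqref{eq:cdv} and \eqref{eq:cdvtilde}, $u_{n+1}$ and $\tilde{u}_{n,n+1}$ are strictly decreasing transforms of $\theta_{n+1}$ and $\tilde{\theta}_{n,n+1}$, so maximizing $v\mapsto\E[u_{n+1}(X_{n+1}^{n,x,v},\ldots)]$ is equivalent to minimizing the bracket in \eqref{eq:def-thetatilde}; the set of optimal continuation controls therefore coincides with the $\argmin$ displayed in the statement. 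At a date $n\in\mathcal{N}$, the identity $u_{n}=\max\{\tilde{u}_{n,n+1},-\exp(-\gamma(QA-x-qS-\ell(q)))\}$ becomes $\theta_{n}=\min\{\tilde{\theta}_{n,n+1},\ell(q)\}$, and since the immediate payoff corresponds to the value $\theta=\ell(q)$ while continuation corresponds to $\tilde{\theta}_{n,n+1}$, monotonicity of $u$ in $\theta$ gives $\tilde{u}\geq\text{(immediate payoff)}\iff\tilde{\theta}_{n,n+1}\leq\ell(q)$. Hence it is optimal to deliver exactly when $\ell(q_{n})\leq\tilde{\theta}_{n,n+1}(q_{n},Z_{n})$ and to continue otherwise, which is the stopping rule claimed. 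The time-$0$ statement is just the monotone reformulation of the explicit expression for $u_{0}$ computed immediately before the proposition.

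Next I would establish that the infimum in \eqref{eq:def-thetatilde} is attained, so that a selector $v^{*}$ exists. Invoking the remark that states can be restricted to $q\in[0,Q]$, suboptimality of leaving $[0,Q]$ shows the infimum over $v\in\R$ equals the infimum over the compact interval $[(q-Q)/\delta t,\,q/\delta t]$. On that interval the map $v\mapsto\logexpo\ldots\logexpc$ is finite and continuous: finiteness and continuity of the entropic expectation follow from the assumption that $\epsilon$ has a moment-generating function finite on $\R_{+}$, together with dominated convergence, the continuity of $L$, and the regularity of $\theta_{n+1}$. For the latter I would prove by descending induction that $\theta_{n}(\cdot,Z)$ is convex in $q$ — the entropic functional $\tfrac1\gamma\log\E\exp(\gamma\cdot)$ preserves convexity and $L$ is convex — which yields continuity on $(0,Q)$ and legitimizes the compact-interval reduction. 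Convexity combined with the superlinearity of $L$ also makes the objective coercive and strictly convex in $v$, giving a unique minimizer; a measurable selection then turns $v^{*}=v^{*}(q_{n},Z_{n})$ into an adapted control and makes $n^{\star}$ a genuine stopping time.

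Finally, the verification itself is a backward induction on $n$: defining the candidate pair $(v^{*},n^{\star})$ by these selectors and the stopping rule, one shows that the value produced by this strategy equals $u_{n}$. Equivalently, for any admissible strategy the stopped process $u_{k\wedge n^{\star}}(X_{k\wedge n^{\star}},q_{k\wedge n^{\star}},S_{k\wedge n^{\star}},A_{k\wedge n^{\star}})$ is a supermartingale — immediate from the $\sup$ and $\max$ in \eqref{eq:ppd-u} — and it is a martingale under the candidate strategy, because each selector attains the corresponding $\sup$ and the stopping rule attains the $\max$; taking expectations between $n$ and $n^{\star}$ then yields optimality. I expect the main obstacle to be the existence-and-selection step of the preceding paragraph: one must control the entropic expectation uniformly enough to pass to limits (whence the finite moment-generating-function hypothesis and the boundedness of $(\theta_{n})$ proved in Appendix A) and must produce a minimizer depending measurably on the state, so that the constructed control is adapted and $n^{\star}$ is indeed a stopping time.
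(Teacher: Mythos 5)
Your overall skeleton is the right one, and it is in substance what the paper does: the paper gives no separate proof of this proposition at all, treating it as a direct read-off of the Bellman equation \eqref{eq:bellman-theta} established in Proposition \ref{prop:cdv}. Your first and last paragraphs make that reading rigorous in the standard way (the map $\theta\mapsto-\exp(-\gamma(-Y-\theta))$ is strictly decreasing, so suprema for $u$ become infima for $\theta$, and optimality of the selectors follows from the usual supermartingale/martingale verification in finite horizon); your tie-breaking convention for delivery also matches the statement.

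The genuine problem is in the step you yourself identify as the crux: attainment of the infimum in \eqref{eq:def-thetatilde}. Your descending induction ``$\theta_{n}(\cdot,Z)$ is convex in $q$'' cannot pass any exercise date, because for $n\in\mathcal{N}$ equation \eqref{eq:bellman-theta} gives $\theta_{n}=\min\{\tilde{\theta}_{n,n+1},\ell\}$, and a pointwise minimum of two convex functions is in general not convex; so the induction hypothesis is lost at every $n\in\mathcal{N}$ and cannot be restarted. The failure is not hypothetical: in the paper's own reference case $\ell(q)=+\infty 1_{\{q\neq0\}}$ (explicitly allowed in a footnote and used in all the numerics), $\theta_{n}(\cdot,Z)$ is neither convex, nor continuous, nor even finite off a subset of $[0,Q]$, so the whole chain ``convexity $\Rightarrow$ continuity $\Rightarrow$ compact-interval argument $\Rightarrow$ strictly convex coercive objective $\Rightarrow$ unique minimizer'' collapses at every link, and uniqueness (which you use to get measurability of the selector for free) is unavailable. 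A repair along your lines would replace convexity by lower semicontinuity: $\ell$ is lsc, minima of lsc functions are lsc, Fatou's lemma gives lsc of the entropic expectation in $(q,v)$, and partial minimization over the compact, continuously varying constraint set $v\in[(q-Q)/\delta t,\,q/\delta t]$ preserves lsc; attainment then follows from lsc plus compactness (with the lower bound $\theta_{n+1}\ge -C_{n+1}Z^{+}-D_{n+1}$ of Appendix A keeping the objective proper), and adaptedness of $v^{*}$ requires an actual measurable selection theorem rather than uniqueness. Alternatively, one can simply take attainment of the displayed $\argmin$ as implicit in the statement, which is what the paper does.
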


\section{Introducing permanent market impact}
\label{sect: permanent market impact}
\subsection{Setup of the model}

In the initial setup presented above, market impact was only temporary
and boiled down to execution costs. In this section we generalize
the previous model to introduce permanent market impact.

In order to introduce permanent market impact, we need to approximate
the payoff of the ASR: instead of considering that $S_{n+1}$ is the
VWAP over the day $[t_n,t_{n+1}]$, we assume that it
is the closing price of that day and hence that $A$ is the average
of the closing prices since inception. This approximation is acceptable
and often made by practitioners when they consider ASR contracts.

The number of shares remaining to be bought still evolves as
$$ q_{n+1}=q_{n}-v_{n}\delta t, $$
but its evolution impacts the stock price.

In most papers, following \cite{almgren} and \cite{gatheral}, the
impact on price is assumed to be proportional to $v_{n}\delta t$.
Here, in line with the square-root law for market impact, we consider
a more general form that is the discrete counterpart of the dynamic-arbitrage-free
model proposed in \cite{gueantperm} (see also the work done by Alfonsi \emph{et al.} in \cite{alfonsi}). For that purpose, we introduce
a nonnegative and decreasing function%
\footnote{This function is a constant function in the case of a linear permanent market
impact. In that case, the impact only depends on the trading quantity $q_{n+1}-q_n = v_n \delta t$
} $f\in L_{loc}^{1}(\R_{+})$, and we write
$$ S_{n+1}=S_{n}+\sigma\sqrt{\delta t}\epsilon_{n+1}+\left(G(q_{n+1})-G(q_{n})\right), $$
where $G(q)=\int_{q}^{Q}f(|Q-y|)dy$.

If we assume that the order of size $v_{n}\delta t$ is executed evenly over the interval $[t_n,t_{n+1}]$, this
leads to the following dynamics for the cash account (see appendix B):
$$
X_{n+1}=X_{n}+S_{n+1}v_{n}\delta t-q_{n}\left(G(q_{n+1})-G(q_{n})\right)+\left(F(q_{n+1})-F(q_{n})\right)-\frac{\sigma v_{n}\delta t^{\frac{3}{2}}}{\sqrt{3}}\epsilon'_{n+1},
$$
where $\left((\epsilon_{k},\epsilon'_{k}\right))_{k}$ are i.i.d.
random variables with moment generating function defined on $\R_+$, with $\mathbb{E}\left[(\epsilon_{k},\epsilon'_{k}\right)]=0$ and
$\mathbb{V}[(\epsilon_{k},\epsilon'_{k})]=\left(\begin{array}{cc}
1 & \dfrac{\sqrt{3}}{2}\\
\dfrac{\sqrt{3}}{2} & 1
\end{array}\right)$ and where $F(q)=\int_{q}^{Q}yf(|Q-y|)dy$.

\begin{rem}
The terms involving $F$ and $G$ are linked to permanent market impact
and stand for the fact that market impact is felt progressively over
the course of the execution of the order. The noise term $\frac{\sigma v_{n}\delta t^{\frac{3}{2}}}{\sqrt{3}}\epsilon'_{n+1}$
corresponds to the fact that we execute progressively and evenly over the day while the price
$S_{n+1}$ is the closing price. We refer the reader to Appendix B for details about this model.
\end{rem}
So far, we considered permanent market impact over the course of the
buying process but not at time of delivery. To account for it at time
of delivery, we suppose that, at time $n^{\star}\in\mathcal{N}\cup\{N\}$,
the shares remaining to be bought ($q_{n^{\star}}$) can be purchased
for a total spending of
\footnote{See appendix B for the rationale of this definition that guarantees the absence of dynamic arbitrage.} $q_{n^{\star}}S_{n^{\star}}+F(0)-F(q_{n^{\star}})+\ell(q_{n^{\star}})$
where $\ell$ is, as above, a penalty function, assumed to be convex,
even, increasing on $\R_{+}$ and verifying $\ell(0)=0$.

The optimization problem the bank faces is then slightly different:
\begin{equation}
\sup_{(v,n^{\star})\in\mathcal{A}}\E\left[-\exp\left(-\gamma\left(QA_{n^{\star}}-X_{n^{\star}}-q_{n^{\star}}S_{n^{\star}}-F(0)+F(q_{n^{\star}})-\ell(q_{n^{\star}})\right)\right)\right],\label{defmaximisation2}
\end{equation}

\subsection{A similar reduction to a 3-dimension problem}

To solve the problem, we introduce as above the value function of
the problem at time $t_n$:
\begin{eqnarray*}
u_{n}(x,q,S,A) & = & \sup_{(v,n^{\star})\in\mathcal{A}_{n}}\E\Big[-\exp\Big(-\gamma\Big(QA_{n^{\star}}^{n,A,S}-X_{n^{\star}}^{n,x,v}-q_{n^{\star}}^{n,q,v}S_{n^{\star}}^{n,S,v}\\
 &  & -F(0)+F(q_{n^{\star}}^{n,q,v})-\ell(q_{n^{\star}}^{n,q,v})\Big)\Big)\Big],
\end{eqnarray*}
where the state variables are defined for $0\leq n\leq k\leq N$ and
$k>0$ by:
\[
\left\{ \begin{array}{lcl}
X_{k}^{n,x,v} & = & x+(F(q_{k}^{n,q,v})-F(q))+\Sum_{j=n}^{k-1}v_{j}S_{j+1}^{n,S,v}\delta t\\
 &  & -q_{j}^{n,q,v}\left(G(q_{j+1}^{n,q,v})-G(q_{j}^{n,q,v})\right)+L\left(\frac{v_{j}}{V_{j+1}}\right)V_{j+1}\delta t\\
q_{k}^{n,q,v} & = & q-\Sum_{j=n}^{k-1}v_{j}\delta t\\
S_{k}^{n,S,v} & = & S+G(q_{k}^{n,q,v})-G(q)+\sigma\sqrt{\delta t}\Sum_{j=n}^{k-1}\epsilon_{j+1}\\
A_{k}^{n,A,S} & = & \frac{n}{k}A+\frac{1}{k}\Sum_{j=n}^{k-1}S_{j+1}^{n,S,v}
\end{array}\right.
\]

The Bellman equation associated to this problem is the following:
\begin{eqnarray}
u_{n}(x,q,S,A) & = & \begin{cases}
-\exp\left(-\gamma\left(QA-x-qS-F(0)+F(q)-\ell(q)\right)\right) & \text{if }n=N,\\
\max\bigg\{\tilde{u}_{n,n+1}\left(x,q,S,A\right),\\
-\exp\left(-\gamma\left(QA-x-qS-F(0)+F(q)-\ell(q)\right)\right)\bigg\}, & \text{if }n\in\mathcal{N},\\
\tilde{u}_{n,n+1}\left(x,q,S,A\right) & \text{otherwise},
\end{cases}\label{eq:ppd-u2}
\end{eqnarray}
where
\begin{eqnarray}
\tilde{u}_{n,n+1}(x,q,S,A) & = & \sup_{v\in\R}\E\Big[u_{n+1}\Big(X_{n+1}^{n,x,v},q_{n+1}^{n,q,v},S_{n+1}^{n,S,v},A_{n+1}^{n,A,S}\Big)\Big].\label{eq:def-utilde2}
\end{eqnarray}
Using a change of variables similar to the one used in Section \ref{sect:reduc-var},
we obtain the reduction of the problem to 3 dimensions:%
\footnote{The proof proceeds from the same ideas as for Proposition \ref{prop:cdv}.%
}
\begin{prop}
\label{prop:cdv2}For $n\ge1$, $u_{n}(x,q,S,A)$ can be written as
\begin{eqnarray}
u_{n}(x,q,S,A) & = & -\exp\left(-\gamma\left(QA-x-qS+F(q)-\theta_{n}\left(q,\dfrac{S-A}{\sigma\sqrt{\delta t}}\right)\right)\right),\label{eq:cdv2}
\end{eqnarray}
where $\theta_{n}$ is defined by induction as:
\begin{eqnarray}
\theta_{n}(q,Z) & = & \begin{cases}
\ell(q)+F(0) & \text{if }n=N,\\
\min\bigg\{\tilde{\theta}_{n,n+1}(q,Z),\ell(q)+F(0)\bigg\} & \text{if }n\in\mathcal{N},\\
\tilde{\theta}_{n,n+1}(q,Z) & \text{otherwise},
\end{cases}\label{eq:bellman-theta2}
\end{eqnarray}
with:
\begin{eqnarray}
\tilde{\theta}_{n,n+1}\left(q,Z\right) & = & \inf_{v\in\R}\logexpo\sigma\sqrt{\delta t}\left(\left(q-\dfrac{Q}{n+1}\right)\epsilon_{n+1}-\dfrac{Q}{n+1}Z\right)-\frac{\sigma v\delta t^{\frac{3}{2}}}{\sqrt{3}}\epsilon'_{n+1}\nonumber \\
 &  & -\frac{Q}{n+1}\left(G(q-v\delta t)-G(q)\right)+L\left(\frac{v}{V_{n+1}}\right)V_{n+1}\delta t\nonumber \\
 &  & +\theta_{n+1}\left(q-v\delta t,\dfrac{n}{n+1}\left(Z+\epsilon_{n+1}+\frac{G(q-v\delta t)-G(q)}{\sigma\sqrt{\delta t}}\right)\right)\logexpc.\label{eq:def-thetatilde2}
\end{eqnarray}

\end{prop}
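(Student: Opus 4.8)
The plan is to mirror the proof of Proposition \ref{prop:cdv}, carrying the permanent-impact terms $F$ and $G$ through every step of the change of variables. The structural idea is unchanged: the exponential utility factorizes the wealth-type quantity out of the expectation, reducing the value function to a scalar certainty-equivalent $\theta_n$ depending only on $(q,Z)$ with $Z=\frac{S-A}{\sigma\sqrt{\delta t}}$. The new feature is that the $F(q)$ term must be built into the ansatz \eqref{eq:cdv2}, so I would first guess the correct form of the modified ``$Y$-variable'' that makes this factorization work.

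First I would define the analogue of Lemma \ref{lem:y}: set $Y_k = -QA_k^{n,A,S}+X_k^{n,x,v}+q_k^{n,q,v}S_k^{n,S,v}-F(q_k^{n,q,v})$ and $Y=-QA+x+qS-F(q)$, and compute $Y_k-Y$ explicitly. The decomposition $A_k-A$ is the same as before since the averaging structure is unchanged. The delicate computation is $X_k-x+q_kS_k-qS-(F(q_k)-F(q))$: here the dynamics of $X$ and $S$ now carry the extra terms $-q_j(G(q_{j+1})-G(q_j))$, $F(q_k)-F(q)$, $G(q_k)-G(q)$ and the $\epsilon'$-noise. The key algebraic point is that, after an Abel-summation-by-parts of $\sum_j q_j(S_{j+1}-S_j)$, the permanent-impact pieces involving $G$ and $F$ should telescope and recombine so that the net contribution to $Y_k-Y$ reduces to the running cost $\sum_j L(\tfrac{v_j}{V_{j+1}})V_{j+1}\delta t$, the noise terms $\sigma\sqrt{\delta t}\sum_j(q_j-(1-\tfrac{j}{k})Q)\epsilon_{j+1}$ and $-\frac{\sigma v_j\delta t^{3/2}}{\sqrt 3}\epsilon'_{j+1}$, plus a residual $G$-drift of the form $-\tfrac{Q}{k}\sum_j(G(q_{j+1})-G(q_j))$ that matches the $G$-term appearing in \eqref{eq:def-thetatilde2}. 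Checking that the choice of $F$ in $Y$ is exactly what cancels the awkward $F(q_k)-F(q)$ pieces is the heart of the matter.

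Once this ``$Y$-lemma'' is in hand, the rest is identical in spirit to Proposition \ref{prop:cdv}. I would substitute into the definition of $u_n$, factor $-\exp(\gamma Y)$ out of the expectation (legitimate since $Y$ is $\mathcal{F}_n$-measurable), and read off \eqref{eq:cdv2} with $\theta_n$ defined as the infimum of a $\tfrac1\gamma\log\E[\exp(\gamma\cdot)]$ functional. The terminal and exercise conditions pick up the extra $F(0)$ from the delivery cost $F(0)-F(q_{n^\star})$ in \eqref{defmaximisation2}, which is why the boundary value in \eqref{eq:bellman-theta2} is $\ell(q)+F(0)$ rather than $\ell(q)$. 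To obtain the one-step relation \eqref{eq:def-thetatilde2} I would apply the $Y$-lemma with $k=n+1$ and recompute the spread increment, which now reads $S_{n+1}-A_{n+1}=\tfrac{n}{n+1}\sigma\sqrt{\delta t}\bigl(Z+\epsilon_{n+1}+\tfrac{G(q-v\delta t)-G(q)}{\sigma\sqrt{\delta t}}\bigr)$ because the permanent impact shifts $S_{n+1}$ by $G(q_{n+1})-G(q)$; this is precisely the shifted second argument of $\theta_{n+1}$ in \eqref{eq:def-thetatilde2}. Plugging \eqref{eq:cdv2} and the one-step identity into the Bellman equation \eqref{eq:ppd-u2} then yields \eqref{eq:bellman-theta2}, and boundedness of $(\theta_n)$ follows as in Appendix A.

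The main obstacle I anticipate is purely the bookkeeping in the $Y$-lemma: verifying that the permanent-impact contributions from $X$ (the $-q_j(G(q_{j+1})-G(q_j))$ and $F$ increments), from $q_kS_k$ (through $S_k$ carrying $G(q_k)-G(q)$), and from the explicit $-F(q)$ I inserted into $Y$ all conspire to telescope cleanly, leaving only the single residual $G$-drift that survives into \eqref{eq:def-thetatilde2}. This is a careful but routine summation-by-parts; the conceptual work lies in having guessed the right $F(q)$ correction in the ansatz so that the cancellation occurs.
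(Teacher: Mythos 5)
Your proposal is correct and is essentially the proof the paper intends: the paper gives no detailed argument, only a footnote stating that the result ``proceeds from the same ideas as for Proposition \ref{prop:cdv}'', and your reconstruction --- the modified $Y$-variable including $-F(q_k)$, the cancellation of the $q_j\left(G(q_{j+1})-G(q_j)\right)$ terms against the $G$-part of $\sum_j q_j(S_{j+1}-S_j)$ after summation by parts, the residual $G$-drift inherited from the average $A$, the shifted spread increment feeding the second argument of $\theta_{n+1}$, and the extra $F(0)$ in the terminal and exercise values --- is exactly that argument. One minor slip worth noting: in the multi-step version of the $Y$-lemma the residual drift is $-Q\sum_j\left(1-\frac{j}{k}\right)\left(G(q_{j+1})-G(q_j)\right)$, equivalently $-\frac{Q}{k}\sum_j\left(G(q_{j+1})-G(q)\right)$, rather than $-\frac{Q}{k}\sum_j\left(G(q_{j+1})-G(q_j)\right)$ as you wrote; the two coincide precisely when $k=n+1$, which is the only case your induction actually uses (the backward induction needs only the one-step increment, since $\theta_n$ is here defined recursively rather than as an infimum over strategies), so nothing in the proof breaks.
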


\subsection{Optimal strategy and pricing of ASR contracts}

As in section \ref{sect: Optimal strategy}, we can deduce from $(\theta_{n})_{n}$ and the
recursive equations (\ref{eq:bellman-theta2}), both the price of
an ASR contract and an optimal strategy.

The indifference price $\Pi$ is implicitly defined by:
\begin{eqnarray*}
u_{0}(0,Q,S_{0}) & = & -\exp\left(-\gamma\left(-\Pi\right)\right).
\end{eqnarray*}

Following the previous calculations, we obtain:

\begin{prop}
The indifference price $\Pi$ of the ASR contract in presence of permanent market impact is given by:
\begin{eqnarray*}
\Pi & = & \inf_{v\in\R}\logexpo-\frac{ \sigma v\delta t^{\frac{3}{2}}}{\sqrt{3}}\epsilon'_{1}-Q\left(G(Q-v\delta t)-G(Q)\right)\\
    &   &  + L\left(\frac{v}{V_{1}}\right)V_{1}\delta t+\theta_{1}\left(Q-v\delta t,0\right)\logexpc\\
 & = & \inf_{v\in\R} \Big\{ \frac{1}{\gamma}h\left(-\gamma\frac{v\delta t}{\sqrt{3}}\right)-Q\left(G(Q-v\delta t)-G(Q)\right)+L\left(\frac{v}{V_{1}}\right)V_{1}\delta t+\theta_{1}\left(Q-v\delta t,0\right) \Big\} ,
\end{eqnarray*}
where $h$ is the cumulant-generating function of the random variable
$\sigma \sqrt{\delta t} \epsilon_{1}'$.
\end{prop}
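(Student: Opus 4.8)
The plan is to mirror the time-$0$ computation of Section~\ref{sect: Optimal strategy}, now carrying along the extra permanent-impact and $\epsilon'$ terms. First I would write the dynamic programming relation at inception,
\begin{equation*}
u_{0}(0,Q,S_{0})=\sup_{v\in\R}\E\left[u_{1}\left(X_{1}^{0,0,v},q_{1}^{0,Q,v},S_{1}^{0,S_{0},v},A_{1}^{0,A,S}\right)\right],
\end{equation*}
and substitute the change-of-variable representation \eqref{eq:cdv2} for $u_{1}$. The key simplification is that at $n=0$ the running average coincides with the first fixing, $A_{1}^{0,A,S}=S_{1}^{0,S_{0},v}$, so the spread argument of $\theta_{1}$ vanishes and $\theta_{1}\left(q_{1},\frac{S_{1}-A_{1}}{\sigma\sqrt{\delta t}}\right)=\theta_{1}(Q-v\delta t,0)$.

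Next I would read off the time-$1$ state variables from the permanent-impact dynamics ($q_{1}=Q-v\delta t$, together with $S_{1}^{0,S_{0},v}$ and $X_{1}^{0,0,v}$) and expand the quantity $QA_{1}-X_{1}-q_{1}S_{1}+F(q_{1})-\theta_{1}$ appearing in \eqref{eq:cdv2}. Two cancellations do the work: the $S_{1}$-proportional terms disappear because $QA_{1}-q_{1}S_{1}=(Q-q_{1})S_{1}=v\delta t\,S_{1}$ exactly offsets the $vS_{1}\delta t$ contribution inside $X_{1}$, so all dependence on $\epsilon_{1}$ drops out; and the boundary term vanishes since $F(Q)=\int_{Q}^{Q}yf(|Q-y|)\,dy=0$. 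After this bookkeeping one finds
\begin{equation*}
u_{1}=-\exp\left(\gamma\left(-\frac{\sigma v\delta t^{\frac{3}{2}}}{\sqrt{3}}\epsilon'_{1}-Q\bigl(G(Q-v\delta t)-G(Q)\bigr)+L\!\left(\frac{v}{V_{1}}\right)V_{1}\delta t+\theta_{1}(Q-v\delta t,0)\right)\right),
\end{equation*}
the sole surviving source of randomness being $\epsilon'_{1}$.

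I would then factor the deterministic terms out of the expectation and use the defining relation $u_{0}(0,Q,S_{0})=-\exp(\gamma\Pi)$, equivalently $\Pi=\frac{1}{\gamma}\log(-u_{0})$. Since $\log$ is increasing the supremum over $v$ turns into an infimum, which yields the first displayed equality of the statement. For the second equality I would absorb the remaining $\epsilon'_{1}$-expectation into the cumulant-generating function: writing $-\gamma\frac{\sigma v\delta t^{\frac{3}{2}}}{\sqrt{3}}\epsilon'_{1}=\left(-\gamma\frac{v\delta t}{\sqrt{3}}\right)\sigma\sqrt{\delta t}\,\epsilon'_{1}$ and using $\delta t^{3/2}/\sqrt{\delta t}=\delta t$, the log-moment factor is exactly $h\!\left(-\gamma\frac{v\delta t}{\sqrt{3}}\right)$, which gives the stated closed form.

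The expansion of the state variables is routine; the only genuine care lies in the sign and scaling bookkeeping — tracking the passage from the $\sup$ defining $u_{0}$ to the $\inf$ defining $\Pi$, and identifying the argument $-\gamma v\delta t/\sqrt{3}$ of $h$ through the $\sqrt{\delta t}$ rescaling. No estimate beyond Proposition~\ref{prop:cdv2} is needed, apart from the finiteness of $h$ at that argument, which follows from the integrability assumption on $\epsilon'_{1}$.
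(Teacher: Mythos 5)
Your proposal is correct and takes essentially the same route as the paper, which obtains this proposition by ``following the previous calculations'', i.e.\ by redoing the time-$0$ computation of Section~\ref{sect: Optimal strategy} with the permanent-impact representation \eqref{eq:cdv2}: the cancellation of the $\epsilon_{1}$-dependence (via $QA_{1}-q_{1}S_{1}=v\delta t\,S_{1}$ against the $vS_{1}\delta t$ term in $X_{1}$), the vanishing of $F(Q)$, the passage from $\sup$ to $\inf$ through $\Pi=\frac{1}{\gamma}\log\left(-u_{0}(0,Q,S_{0})\right)$, and the identification of the residual $\epsilon'_{1}$-expectation with $\frac{1}{\gamma}h\bigl(-\gamma\frac{v\delta t}{\sqrt{3}}\bigr)$ are exactly the intended steps. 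Your sign and scaling bookkeeping checks out against the stated formula.
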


Associated to this indifference price, we can exhibit an optimal strategy.
\begin{prop}
At time $0$, an optimal strategy consists in sending an order of
size $v^{*}\delta t$ where
\begin{eqnarray*}
v^{*} & \in & \argmin_{v\in\R} \Big\{ \frac{1}{\gamma}h\left(-\gamma\frac{v\delta t}{\sqrt{3}}\right)-Q\left(G(Q-v\delta t)-G(Q)\right)+L\left(\frac{v}{V_{1}}\right)V_{1}\delta t+\theta_{1}\left(Q-v\delta t,0\right) \Big\}.
\end{eqnarray*}
At an intermediate time $t_n<T$, if the bank has not already
delivered the shares, then, denoting $Z_{n}=\frac{S_{n}-A_{n}}{\sigma\sqrt{\delta t}}$,
an optimal strategy consists in the following:
\begin{itemize}
\item If $n\not\in\mathcal{N}$, send an order of size $v^{*}\delta t$
where
\begin{eqnarray*}
v^{*} & \in & \argmin_{v\in\R}\logexpo\sigma\sqrt{\delta t}\left(\left(q-\dfrac{Q}{n+1}\right)\epsilon_{n+1}-\dfrac{Q}{n+1}Z\right)-\frac{\sigma v\delta t^{\frac{3}{2}}}{\sqrt{3}}\epsilon'_{n+1}\\
 &  & -\frac{Q}{n+1}\left(G(q-v\delta t)-G(q)\right)+L\left(\frac{v}{V_{n+1}}\right)V_{n+1}\delta t\\
 &  & +\theta_{n+1}\left(q-v\delta t,\dfrac{n}{n+1}\left(Z+\epsilon_{n+1}+\frac{G(q-v\delta t)-G(q)}{\sigma\sqrt{\delta t}}\right)\right)\logexpc.
\end{eqnarray*}

\item If $n\in\mathcal{N}$, compare $\tilde{\theta}_{n,n+1}(q_{n},Z_{n})$
and $\ell(q_{n})+F(0)$.

\begin{itemize}
\item If $\tilde{\theta}_{n,n+1}(q_{n},Z_{n})<\ell(q_{n})+F(0)$, then send
an order of size $v^{*}\delta t$ where
\begin{eqnarray*}
v^{*} & \in & \argmin_{v\in\R}\logexpo\sigma\sqrt{\delta t}\left(\left(q-\dfrac{Q}{n+1}\right)\epsilon_{n+1}-\dfrac{Q}{n+1}Z\right)-\frac{\sigma v\delta t^{\frac{3}{2}}}{\sqrt{3}}\epsilon'_{n+1}\\
 &  & -\frac{Q}{n+1}\left(G(q-v\delta t)-G(q)\right)+L\left(\frac{v}{V_{n+1}}\right)V_{n+1}\delta t\\
 &  & +\theta_{n+1}\left(q-v\delta t,\dfrac{n}{n+1}\left(Z+\epsilon_{n+1}+\frac{G(q-v\delta t)-G(q)}{\sigma\sqrt{\delta t}}\right)\right)\logexpc.
\end{eqnarray*}

\item If $\ell(q_{n})+F(0)\le\tilde{\theta}_{n,n+1}(q_{n},Z_{n})$, then
deliver the shares after the remaining ones have been bought (supposedly
instantaneously in the model).
\end{itemize}
\end{itemize}
\end{prop}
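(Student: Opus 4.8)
The plan is to prove the result by a \emph{verification argument} resting on the Bellman equation \eqref{eq:bellman-theta2} established in Proposition \ref{prop:cdv2}. Since maximizing $u_n$ is, through the change of variables \eqref{eq:cdv2}, equivalent to minimizing the cost function $\theta_n$, the proposed feedback strategy is precisely the one that realizes, at every stage, the pointwise minima appearing in the recursion for $\theta_n$: the inner $\argmin$ over $v$ in the definition \eqref{eq:def-thetatilde2} of $\tilde\theta_{n,n+1}$, and, at times $n\in\mathcal{N}$, the comparison between the continuation value $\tilde\theta_{n,n+1}(q_n,Z_n)$ and the immediate-delivery value $\ell(q_n)+F(0)$. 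I would therefore show, by backward induction on $n$, that the trajectory $(q_n,Z_n)$ generated by this strategy realizes the value $\theta_n(q_n,Z_n)$ at each step, so that the strategy attains $u_0(0,Q,S_0)$ and hence the indifference price $\Pi$ computed in the preceding Proposition.

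Concretely, first I would set up the induction. At $n=N$ delivery is forced and the identity is the terminal condition $\theta_N(q,Z)=\ell(q)+F(0)$. For the inductive step, assuming the claim holds at $n+1$, I would treat the two regimes separately. When $n\notin\mathcal{N}$, the bank must continue, and choosing the feedback control $v^*$ attaining the $\argmin$ in \eqref{eq:def-thetatilde2} gives, by the induction hypothesis applied inside the conditional expectation, exactly $\tilde\theta_{n,n+1}(q_n,Z_n)=\theta_n(q_n,Z_n)$. When $n\in\mathcal{N}$, the recursion reads $\theta_n=\min\{\tilde\theta_{n,n+1},\ell(q)+F(0)\}$; selecting the smaller branch---delivering when $\ell(q_n)+F(0)\le\tilde\theta_{n,n+1}(q_n,Z_n)$ and otherwise continuing with $v^*$---attains this minimum. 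I would also check admissibility: the stopping rule ``deliver the first $n\in\mathcal{N}$ at which $\ell(q_n)+F(0)\le\tilde\theta_{n,n+1}(q_n,Z_n)$, and at $N$ otherwise'' is an $(\mathcal{F})$-stopping time since $(q_n,Z_n)$ is adapted, and the feedback controls are adapted as well.

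The time-$0$ step requires separate treatment because, as noted after the Bellman equation, $A$ is not defined at inception and $u_0$ does not depend on the price. Here I would invoke the expression for $\Pi$ obtained in the previous Proposition, which is exactly $\inf_v$ of the stated objective; the $v^*$ achieving that $\argmin$ is the announced first-day order, and combining it with the feedback strategy for $n\ge1$ yields a fully specified optimal strategy.

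The main obstacle is the \emph{attainment} of the infima, which the statement presupposes by writing $v^*\in\argmin$. I would establish this by showing that, for each fixed $(q,Z)$ and each $n$, the objective in \eqref{eq:def-thetatilde2} (and likewise the time-$0$ objective) is a continuous, coercive function of $v$. Coercivity follows from the asymptotic superlinearity of $L$: the term $L(v/V_{n+1})V_{n+1}\delta t$ grows superlinearly in $|v|$ and thus dominates the linear-in-$v$ contributions from the $\epsilon'$ noise and from the $G$ terms (the latter controlled through $f\in L^1_{loc}(\R_+)$ and the monotonicity of $G$), together with the $\theta_{n+1}$ contribution, which an accompanying induction shows to be bounded below and of at most linear growth. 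Alternatively, one may use the reduction to $q\in[0,Q]$, which confines the effective range of $v$ to a compact interval and yields attainment by continuity alone. Finiteness and continuity of the log-moment-generating functional rely on the assumed existence of the moment-generating function of $(\epsilon,\epsilon')$. Together, coercivity and lower semicontinuity guarantee that each infimum is attained, which closes the verification.
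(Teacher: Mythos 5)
Your proposal is correct and follows essentially the same route as the paper: the paper states this proposition without an explicit proof, deducing it directly from the Bellman recursion \eqref{eq:bellman-theta2} of Proposition \ref{prop:cdv2}, which is exactly the dynamic-programming verification you formalize (feedback $\argmin$ for the control, branch comparison for the stopping decision, separate treatment of time $0$ via the price formula). Your added care about attainment of the infima and admissibility goes beyond what the paper writes down --- the paper simply writes $\argmin$, and its Appendix A bounds on $\theta_n$ (which your coercivity argument would extend to the permanent-impact case) are the natural ingredient for that point.
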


\section{Numerical methods and examples}
\label{sect: numerics}
We now present numerical methods in the case where there is no permanent market impact. In this case indeed, the problem can be solved using trees of polynomial size if we specify the law of $(\epsilon_n)_n$ adequately. These trees are not classical recombinant trees as in most tree methods used in Finance (e.g. Cox-Ross-Rubinstein model) but the total number of nodes evolves as $\mathcal{O}(N^3)$ where $N$ is the number of time steps. The model we present is based on pentanomial trees. Using these trees, we  carry out comparative statics in order to analyze the role of the parameters.

\subsection{Description of the pentanomial tree method}

To design a numerical method in order to obtain a solution to our problem, we need first to choose a specific distribution for the innovation process $\left(\epsilon_{n}\right)_{n\ge1}$. A simple case would consist in a simple coin-toss model with $\epsilon_{n}=\pm 1$ with probabilities 50\%-50\%. In order to obtain more complex price dynamics, we consider the following distribution for  $\left(\epsilon_{n}\right)_{n\ge1}$:\footnote{The distribution is chosen to match the first four moments of a standard normal: \[
\left\{ \begin{array}{ccc}
\mathbb{E}\left[\epsilon_{n}\right] & = & 0\\
\mathbb{E}\left[\epsilon_{n}^{2}\right] & = & 1\\
\mathbb{E}\left[\epsilon_{n}^{3}\right] & = & 0\\
\mathbb{E}\left[\epsilon_{n}^{4}\right] & = & 3
\end{array}\right.
\]}

\begin{eqnarray*}
\epsilon_{n} & = & \begin{cases}
+2 & \text{with probability }\frac{1}{12}\\
+1 & \text{with probability }\frac{1}{6}\\
0 & \text{with probability }\frac{1}{2}\\
-1 & \text{with probability }\frac{1}{6}\\
-2 & \text{with probability }\frac{1}{12}
\end{cases}
\end{eqnarray*}

Using the Bellman equations  (\ref{eq:bellman-theta}) and (\ref{eq:def-thetatilde}), we see that the computation of $\theta_{n}(q,Z)$ for $n<N$ requires the values of $\theta_{n+1}\left(\cdot,\dfrac{n}{n+1}\left(Z+k\right)\right)$
for $k$ in $\{-2,-1,0,+1,+2\}$.

In order to solve numerically the Bellman equation, we therefore use a tree. The issue is that nothing ensures a priori that the number of nodes does not increase exponentially with the number of time steps. In fact, we will show that the growth in the number of nodes is quadratic\footnote{This means that the number of nodes in the whole tree is a cubic function of the number of time steps $N$.} rather than exponential in the number of time steps. To see this point, we start with a change of variables.

\begin{prop}
$\theta_{n}$ can be written as:
\begin{eqnarray*}
\theta_{n}\left(q,Z\right) & = & \Theta_{n}\left(q,n\left(Z+n-1\right)\right),
\end{eqnarray*}
where $\left(\Theta_{n}\right)$ satisfies the recursive equation:
\[
\left\{ \begin{array}{cclc}
\Theta_{n}\left(q,\zeta\right) & = & \ell(q) & \text{if }n = N,\\
\Theta_{n}\left(q,\zeta\right) & = & \min\left\{ \tilde{\Theta}_{n,n+1}\left(q,\zeta\right),l(q)\right\}  & \text{if }n\in\mathcal{N},\\
\Theta_{n}\left(q,\zeta\right) & = & \tilde{\Theta}_{n,n+1}\left(q,\zeta\right) & \text{otherwise},
\end{array}\right.
\]
where
\begin{equation}
\begin{array}{l}
\tilde{\Theta}_{n,n+1}\left(q,\zeta\right)=\\
\begin{array}{ccl}
\inf_{\tilde{q}\in\R}\frac{1}{\gamma}\log\Bigg( & \sum_{j=0}^{4}p_{j}\exp\Bigg(\gamma\Bigg( & \sigma\sqrt{\Delta t}\left(\left(j-2\right)\left(q-\frac{q_{0}}{n+1}\right)-\frac{Q}{n+1}\left(\dfrac{\zeta}{n}-\left(n-1\right)\right)\right)\\
 &  & +L\left(\dfrac{q-\tilde{q}}{V_{n+1}\Delta t}\right)V_{n+1}\Delta t+\Theta_{n+1}\left(\tilde{q},\zeta+nj\right)\Bigg)\Bigg)\Bigg),
\end{array}
\end{array}\label{eq:Thetatilde-penta}
\end{equation}
with $\left(p_{j}\right)$ is given by:
\begin{align*}
\left\{ \begin{array}{ccccc}
p_{0} & = & p_{4} & = & \dfrac{1}{12}\\
p_{1} & = & p_{3} & = & \dfrac{1}{6}\\
p_{2} & = & \dfrac{1}{2}
\end{array}\right.
\end{align*}
\end{prop}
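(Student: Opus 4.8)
The plan is to establish the claimed representation $\theta_n(q,Z)=\Theta_n\bigl(q,n(Z+n-1)\bigr)$ by backward induction on $n$, simultaneously verifying that the recursion for $\Theta_n$ reproduces the Bellman equation \eqref{eq:bellman-theta} for $\theta_n$ under this substitution. The motivation for the substitution is that the pentanomial innovations take values in $\{-2,-1,0,1,2\}$, so the second argument of $\theta_{n+1}$ appearing in \eqref{eq:def-thetatilde} is $\frac{n}{n+1}(Z+\epsilon_{n+1})$; I want to find a rescaling $\zeta=g(n)(Z+c(n))$ that turns the five possible updates of $Z$ into five \emph{integer} translates of $\zeta$, so that the reachable values of the second argument form a lattice and the node count grows only polynomially.

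First I would fix the ansatz $\zeta = n(Z+n-1)$, i.e. $Z = \frac{\zeta}{n}-(n-1)$, and define $\Theta_n(q,\zeta):=\theta_n\bigl(q,\frac{\zeta}{n}-(n-1)\bigr)$; the content is to show this $\Theta_n$ obeys the stated recursion. The base case $n=N$ is immediate since $\theta_N(q,Z)=\ell(q)$ does not depend on $Z$, giving $\Theta_N(q,\zeta)=\ell(q)$. For the inductive step I would substitute $Z=\frac{\zeta}{n}-(n-1)$ into \eqref{eq:def-thetatilde} and track how the second argument of $\theta_{n+1}$ transforms. Writing $v\delta t = q-\tilde q$ (the change from integrating over order size $v$ to integrating over the next inventory $\tilde q$, which accounts for the $L\bigl(\frac{q-\tilde q}{V_{n+1}\Delta t}\bigr)$ term and the $\inf_{\tilde q}$), and replacing the random $\epsilon_{n+1}$ by the discrete value $j-2$ for $j\in\{0,\dots,4\}$ with weights $p_j$, the expectation becomes the explicit sum $\sum_j p_j\exp(\dots)$.

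The crucial computation is the update of the $\zeta$-coordinate. Inside $\theta_{n+1}$ the second argument is $\frac{n}{n+1}(Z+\epsilon_{n+1})$, so by the inductive hypothesis (applied at level $n+1$) I must re-express it through $\Theta_{n+1}$, namely evaluate $\Theta_{n+1}\bigl(\tilde q, (n+1)\bigl(\frac{n}{n+1}(Z+\epsilon_{n+1})+n\bigr)\bigr)$. The key algebraic check is that this new second argument equals $\zeta + n\epsilon_{n+1}$: indeed $(n+1)\bigl(\frac{n}{n+1}(Z+\epsilon_{n+1})+n\bigr)=n(Z+\epsilon_{n+1})+n(n+1)=n(Z+n-1)+n+n\epsilon_{n+1}+\dots$, and one verifies the constants collapse so that with $\epsilon_{n+1}=j-2$ the argument is exactly $\zeta+nj$ after absorbing the deterministic shift (this is why the ansatz uses $n-1$ rather than $n$). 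I would also confirm the $Z$-term $-\frac{Q}{n+1}Z$ becomes $-\frac{Q}{n+1}\bigl(\frac{\zeta}{n}-(n-1)\bigr)$, matching the first line of \eqref{eq:Thetatilde-penta}. The main obstacle is purely bookkeeping: getting the affine constants in the $\zeta$-update to cancel exactly so that the five outcomes land on $\zeta+nj$ for $j=0,\dots,4$ (integer multiples of $n$ added to $\zeta$), since any error in the additive constant $n-1$ would leave a residual shift and break the lattice structure. Once this identity is checked, substituting into the three cases of \eqref{eq:bellman-theta} (noting $\ell(q)$ is unchanged under the substitution) yields the stated recursion for $\Theta_n$, completing the induction.
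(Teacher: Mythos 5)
Your proposal is correct and takes essentially the same route as the paper: the same ansatz $\Theta_n(q,\zeta)=\theta_n\bigl(q,\tfrac{\zeta}{n}-(n-1)\bigr)$, the same reparametrization $\tilde q = q - v\,\delta t$ of the infimum, and the same key identity $(n+1)\bigl(\tfrac{n}{n+1}(Z+\epsilon_{n+1})+n\bigr)=\zeta+n(\epsilon_{n+1}+2)=\zeta+nj$ showing the $\zeta$-updates land on the lattice. The paper phrases the verification as a direct level-by-level substitution into the Bellman equation rather than as a formal backward induction, but the algebra is identical.
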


\begin{proof}
We first define $ \Theta_n $ for $ 1 \le n \le N$ as:
$$ \Theta_n(q,\zeta)= \theta_n\Big(q,\dfrac{\zeta}{n}-\left(n-1\right) \Big).$$
Using then the Bellman equation (\ref{eq:bellman-theta}) satisfied by $ (\theta_n)_{0\le n \le N}$, we have:
\[
\left\{ \begin{array}{cclc}
\Theta_{n}\left(q,\zeta\right) & = & \ell(q) & \text{if }n=N,\\
\Theta_{n}\left(q,\zeta\right) & = & \min\left\{ \tilde{\theta}_{n,n+1}\left(q,\dfrac{\zeta}{n}-\left(n-1\right)\right),l(q)\right\}  & \text{if }n\in\mathcal{N},\\
\Theta_{n}\left(q,\zeta\right) & = & \tilde{\theta}_{n,n+1}\left(q,\dfrac{\zeta}{n}-\left(n-1\right)\right) & \text{otherwise}.
\end{array}\right.
\]
We now use the definition of $ \tilde{\theta}_{n,n+1}$ to compute $ \tilde{\theta}_{n,n+1}\left(q,\dfrac{\zeta}{n}-\left(n-1\right) \right)$:
\begin{eqnarray*}
 &  & \tilde{\theta}_{n,n+1}\left(q,\dfrac{\zeta}{n}-\left(n-1\right) \right)\\
 & = & \inf_{\tilde{q}\in\R}\logexpo\sigma\sqrt{\delta t}\left(\left(q-\dfrac{Q}{n+1}\right)\epsilon_{n+1}-\dfrac{Q}{n+1}\left(\dfrac{\zeta}{n}-\left(n-1\right)\right)\right)\\
 &  & +L\left(\frac{q-\tilde{q}}{V_{n+1}\delta t}\right)V_{n+1}\delta t+\theta_{n+1}\left(\tilde{q},\dfrac{n}{n+1}\left(\left(\dfrac{\zeta}{n}-\left(n-1\right)\right)+\epsilon_{n+1}\right)\right)\logexpc.
\end{eqnarray*}
Noticing that $$ \theta_{n+1}\left(\tilde{q},\dfrac{n}{n+1}\left(\left(\dfrac{\zeta}{n}-\left(n-1\right)\right)+\epsilon_{n+1}\right)\right)=\Theta_{n+1}\left( \tilde{q},\zeta+n \big(\epsilon_{n+1}+2 \big)\right),$$ we obtain:
\begin{equation*}
\tilde{\theta}_{n,n+1}\left(q,\dfrac{\zeta}{n}-\left(n-1\right) \right)= \tilde{\Theta}_{n,n+1}\left(q,\zeta\right).
\end{equation*}
This is exactly the result.
\end{proof}

The new variable $\zeta$ is the index for nodes. To compute $\Pi$, we need to compute $\Theta_1(\cdot,0)$ which is computed using $\Theta_2(\cdot,0)$, $\Theta_2(\cdot,1)$, $\Theta_2(\cdot,2)$, $\Theta_2(\cdot,3)$ and $\Theta_2(\cdot,4)$. By induction, we see that, at step $n$, we shall need the values of  $\Theta_n(\cdot,0), \Theta_n(\cdot,1), \ldots, \Theta_n(\cdot,2n(n-1))$. In particular, the number of nodes at each level of the tree evolves in a quadratic way.\\
At each node $(n,\zeta)$ in the tree, we compute, using classical optimization methods, the values $\Theta_n(0,\zeta), \Theta_n(\delta q,\zeta), \ldots, \Theta_n(Q,\zeta)$, where $\delta q$ is the step between two consecutive values of $q$ in our grid. In addition to these values, we also set a table of boolean flags at each node $(\zeta,n)$ with $n\in \mathcal{N}$ to identify whether the bank should deliver the shares.

\subsection{Examples}

\subsubsection{Reference scenarios}

We now turn to the practical use of the above tree method. We consider the following reference case with no permanent market impact, that corresponds to rounded values for the stock Total SA. This case will be used throughout the remainder of this text.

\begin{itemize}
\item $S_{0}=45$ €
\item $\sigma=0.6$ €$\cdot\text{day}^{-1/2}$, which corresponds to an
annual volatility approximately equal to $21\%$.
\item $T=63$ trading days. The set of possible dates for delivery before expiry is $\mathcal{N} = [22,62]\cap \mathbb N$.
\item $V=4\ 000\ 000$ stocks$\cdot$ $\text{day}^{-1}$
\item $Q=20\ 000\ 000$ stocks
\item $L(\rho)=\eta|\rho|^{1+\phi}$ with $\eta=0.1$ € $\cdot\mbox{stock}^{-1}\cdot\text{day}^{-1}$
and $\phi=0.75$.
\end{itemize}
Furthermore, we force the bank to have already purchased all the stocks
at delivery. Therefore, we use $\ell(q)  =  +\infty 1_{\{q\neq0\}}.$\\

Our choice for risk aversion is $\gamma=2.5\times 10^{-7}$ €$^{-1}$.

To exemplify the model, we consider three trajectories for the price. These trajectories have been selected among a vast number of draws in order to exhibit several properties of the optimal strategy of the bank.

The first price trajectory (Figure \ref{price_up}) has an upward trend and the stock price is therefore above its average (dotted line). This corresponds to a positive $Z$ and the bank has no reason to deliver the shares rapidly. In this case indeed, the optimal strategy of the bank consists in buying the shares slowly to minimize execution costs, as exhibited on Figure \ref{ref_up}.

\begin{figure}[H]
\centering{}\includegraphics[width=0.68\textwidth]{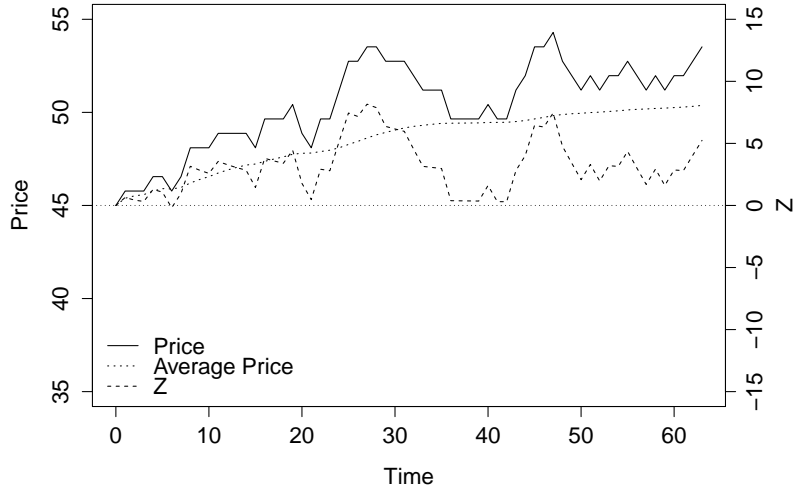}\caption{Price Trajectory 1}
\label{price_up}
\end{figure}

\begin{figure}[H]
\centering{}\includegraphics[width=0.68\textwidth]{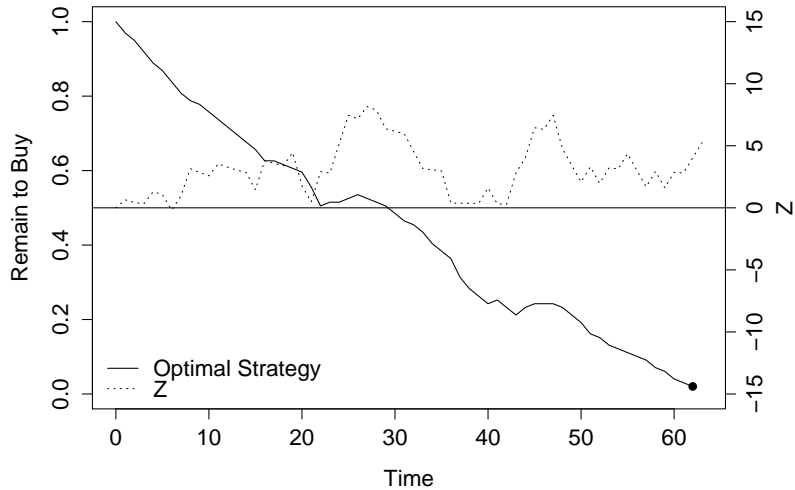}\caption{Optimal Strategy for Price Trajectory 1}
\label{ref_up}
\end{figure}

The second price trajectory (Figure \ref{price_down}) has a downward trend and the stock price is therefore below its average. This corresponds to a negative $Z$ and the bank has an incentive to deliver rapidly. However, the values of $Z$ are not low enough to encourage delivery when $n=22$ (the first date of $\mathcal{N}$). The trajectory of $Z$ and the level of execution costs eventually lead to delivery when $n=36$ -- see Figure \ref{ref_down}. We also see that it may even be optimal to sell shares when $Z$ increases after a decrease: this is in line with the risk term in equation (\ref{effects}) since an increase in $Z$ postpones the targeted time of delivery $t_{n^{\star}}$.

\begin{figure}[H]
\centering{}\includegraphics[width=0.7\textwidth]{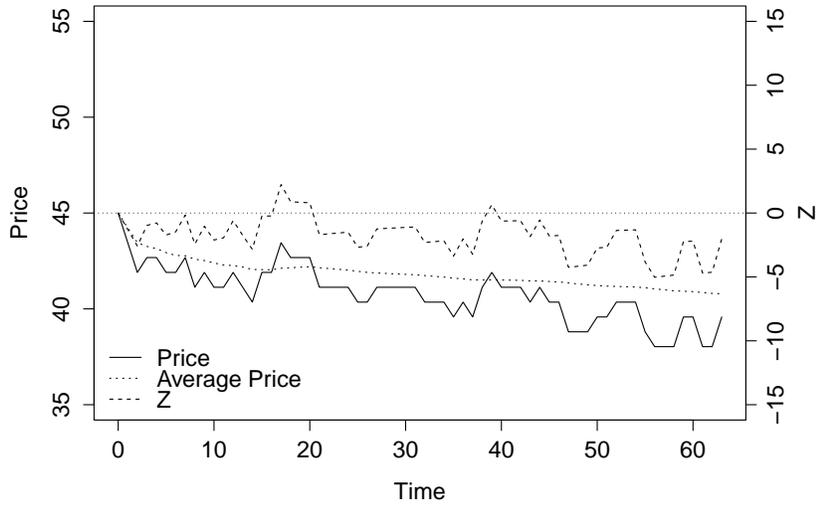}\caption{Price Trajectory 2}
\label{price_down}
\end{figure}

\begin{figure}[H]
\centering{}\includegraphics[width=0.7\textwidth]{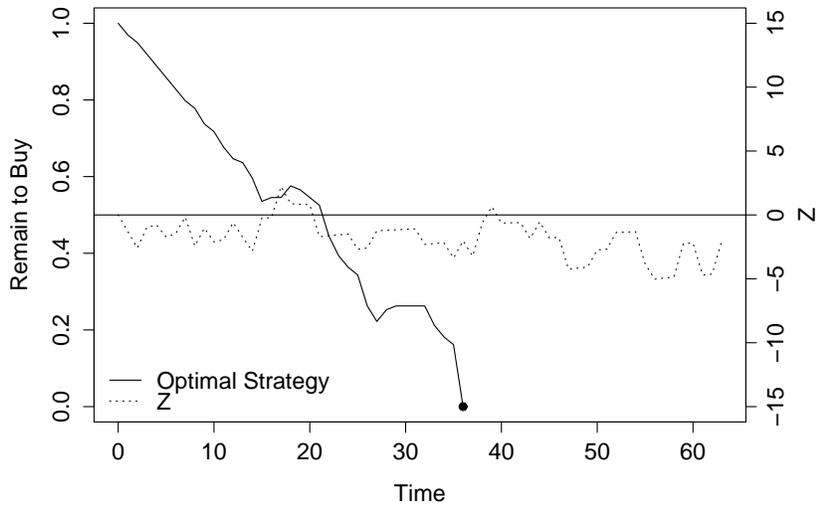}\caption{Optimal Strategy for Price Trajectory 2}
\label{ref_down}
\end{figure}

The third price trajectory we consider (Figure \ref{price_mid}) corresponds to $Z$ oscillating around $0$. As in the preceding two examples, we see on Figure \ref{ref_mid} that the behavior of the bank is strongly linked to $Z$ in a natural way. When $Z$ decreases to reach negative values, $A$ is larger than the cost of buying the shares (if execution costs are not too large) and the bank accelerates the buying process. On the contrary, when $Z$ increases, the buying process slows down or even turns  into a selling process (see day 16 on Figure \ref{ref_mid}). As above, the rationale for that is risk aversion: when $Z$ goes from a low value to a high value, the targeted time of delivery is somehow postponed and the bank has an incentive to jump to a trajectory that permits to hedge the payoff, as explained at the end of Section \ref{sect:change of var}. But the most interesting phenomenon appears after day 28: the excursion of $Z$ below $0$ made it optimal for the bank to hold a portfolio with $Q$ shares on day 28, that is all the shares needed to deliver... but delivery did not occur then. The bank was indeed long of a Bermudan option (in fact American here) with a complex payoff and did prefer to hold it instead of exercising it. The bank eventually delivered the shares at terminal time. The round trip on the underlying after the $28^{th}$ day corresponds then to mitigating the risk of the option.

\begin{figure}[H]
\centering{}\includegraphics[width=0.7\textwidth]{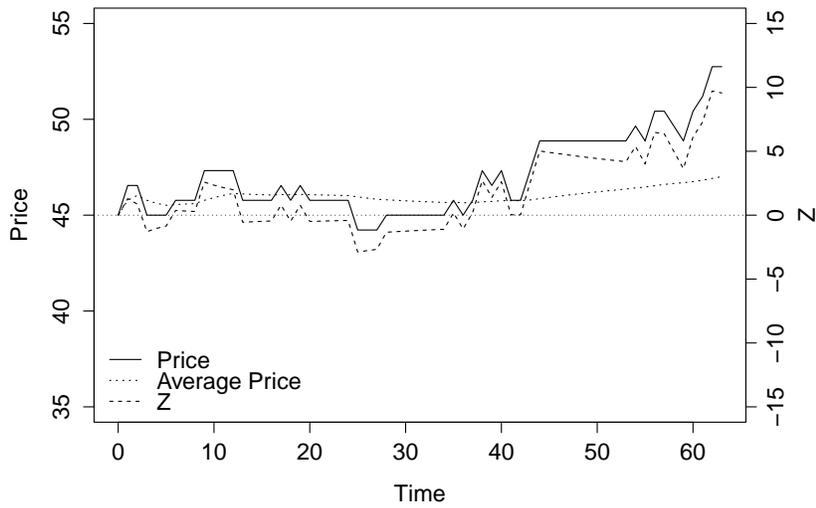}\caption{Price Trajectory 3}
\label{price_mid}
\end{figure}

\begin{figure}[H]
\centering{}\includegraphics[width=0.7\textwidth]{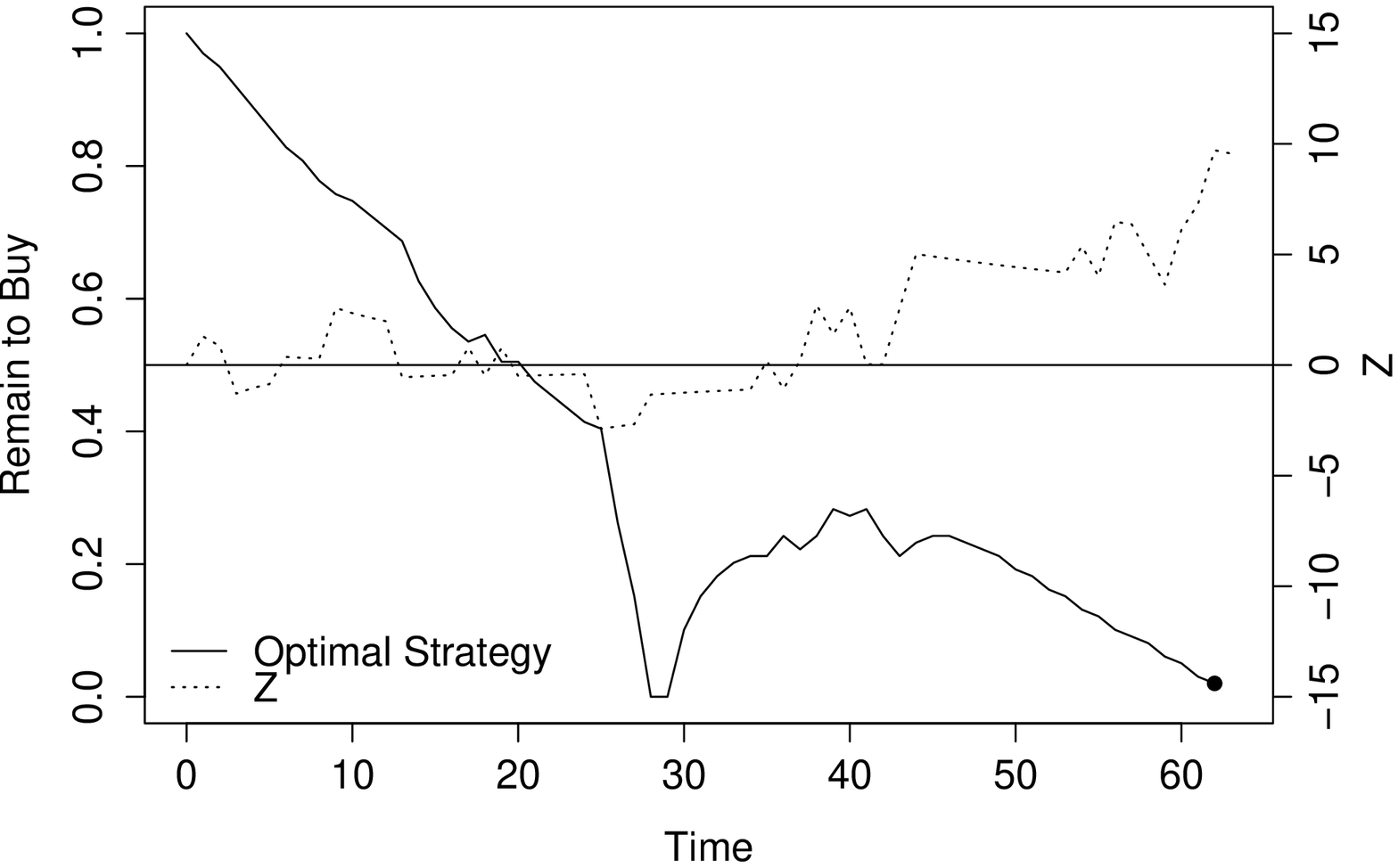}\caption{Optimal Strategy for Price Trajectory 3}
\label{ref_mid}
\end{figure}

In addition to optimal strategies, we can compute in our reference case, the price $\Pi$ of the ASR contract. Here, this price is negative as we found $\frac{\Pi}{Q} = -0.503$.\footnote{This does not mean that the price is linear in $Q$. It just means that, for the value of $Q$ we considered, we found $\frac{\Pi}{Q} = -0.503$.} This means, in utility terms, that the gain associated to the optionality component of the ASR contract is important enough to compensate (in utility terms) the liquidity costs and the risk of the contract (see Remark \ref{oboidorman}).

\subsubsection{Buy-Only Strategies}

We have seen above that the optimal strategy of the bank may involve selling the shares that have been bought. At first sight, this may look like arbitrage but it is not: it is rather a natural consequence of both the payoff of the ASR contract and the risk aversion of the bank. We exhibit on Figures \ref{constrained_up}, \ref{constrained_down} and \ref{constrained_mid}, for the three reference trajectories considered above, what would be the optimal strategy, had we restricted the admissible set of strategies to buy-only ones.

\begin{figure}[H]
\centering{}\includegraphics[width=0.7\textwidth]{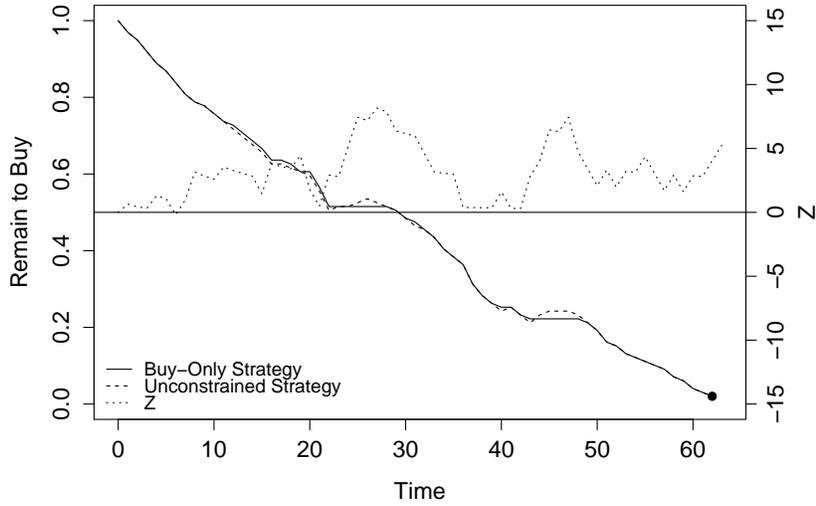}\caption{Buy-Only Strategy vs Unconstrained Strategy for Price Trajectory 1}
\label{constrained_up}
\end{figure}
\begin{figure}[H]
\centering{}\includegraphics[width=0.7\textwidth]{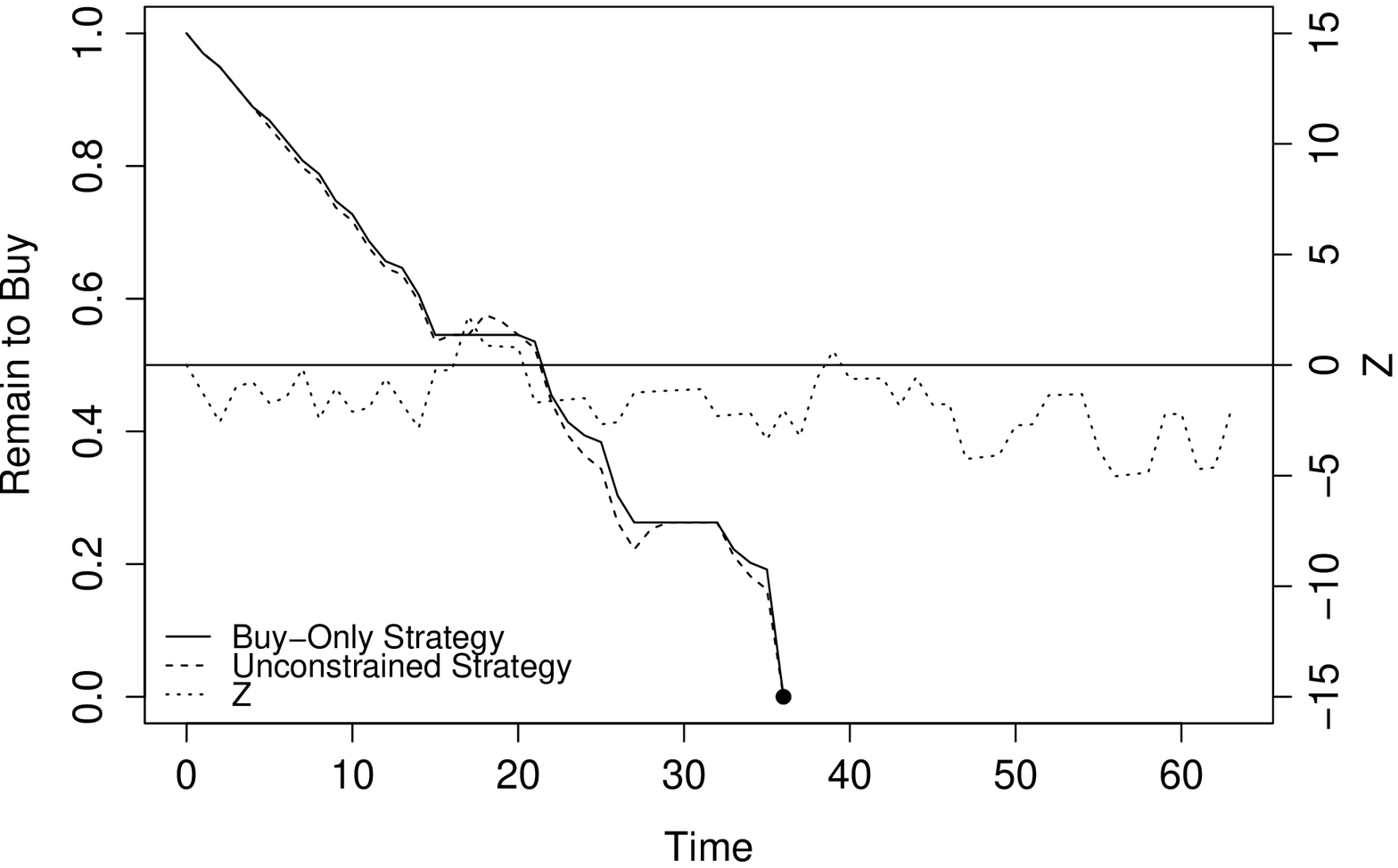}\caption{Buy-Only Strategy vs Unconstrained Strategy for Price Trajectory 2}
\label{constrained_down}
\end{figure}
\begin{figure}[H]
\begin{centering}
\includegraphics[width=0.7\textwidth]{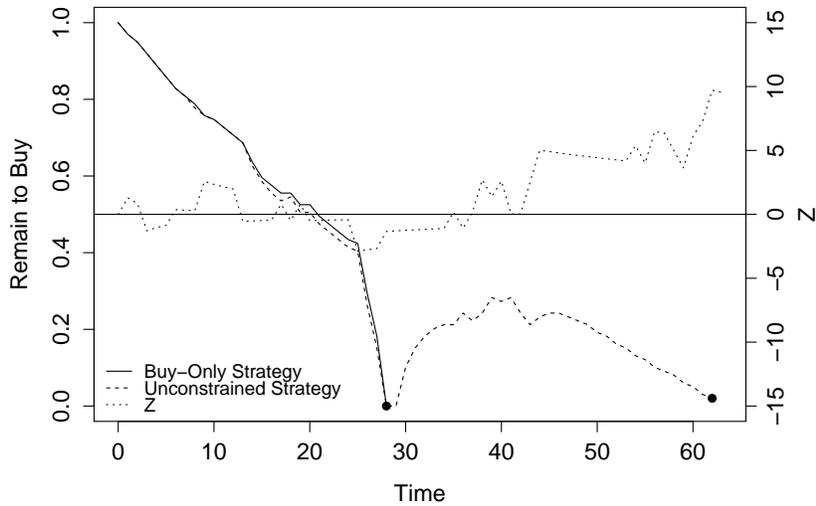}
\par\end{centering}

\centering{}\caption{Buy-Only Strategy vs Unconstrained Strategy for Price Trajectory 3}
\label{constrained_mid}
\end{figure}

We see on Figures \ref{constrained_up}, \ref{constrained_down} and \ref{constrained_mid} that, when round trips are not allowed, the bank usually buys more slowly in order to avoid the need of round trips. Another difference is magnified in the case of trajectory 3 as the bank delivers the shares more rapidly. The risk associated to the option cannot indeed be partially hedged as we restrict strategies to be buy-only strategies. The bank then decides to deliver when $n=28$.\\

To quantify the difference between the constrained case and the unconstrained case, we can compare the previous price $\Pi$ with the price $\Pi_{\text{constrained}}$ obtained when the set of strategies is limited to buy-only strategies. Not surprisingly $\frac{\Pi_{\text{constrained}}}{Q} = -0.486$ is larger than $\frac{\Pi}{Q}$ and the difference is rather low for the risk aversion parameter $\gamma$ chosen.

\subsection{Comparative Statics}

We now use the pentanomial tree method in order to study the effects of the parameters on the optimal strategy and on the price of the ASR contract. More precisely, we focus on the risk aversion parameter $\gamma$, on the illiquidity parameter $\eta$ and on the volatility parameter $\sigma$.\footnote{The influence of the nominal $Q$ can be deduced from the influence of $\eta$ and $\gamma$. Similarly, any multiplicative change in the volume curve can be analyzed as a change in $\eta$. The influence of $\phi$ is not studied however, as $\phi$ does not differ much across stocks.}

\subsubsection{Effect of Risk Aversion}

Let us focus first on risk aversion. We considered our reference case with 4 values\footnote{The formulas obtained in Section \ref{sect:reduc-var} can be extended to the case $\gamma = 0$.} for the parameter $\gamma$: $0, 2.5\times10^{-9}, 2.5\times10^{-7},$ and $2.5\times10^{-6}$. Figures \ref{gamma_up}, \ref{gamma_down} and \ref{gamma_mid} show the influence of $\gamma$ on the optimal strategy for the three reference stock price trajectories introduced previously. We see that the more risk averse the bank, the closer to the straight line its strategy. This is natural as the straight line strategy is a way to hedge perfectly against the risk associated to the payoff. At the other end of the spectrum, when $\gamma =0$, the optimal strategy is far from the diagonal and what prevents the bank from buying instantaneously is just execution costs. An interesting point is also that, when $\gamma = 0$, the optimal strategy does not involve any round trip.\footnote{This is in fact straightforward by induction.} Finally, we see on Figure \ref{gamma_mid} that $\gamma$ high ($2.5\times10^{-6}$) also deters the bank from delaying delivery.

\begin{figure}[H]
\centering{}\includegraphics[width=0.65\textwidth]{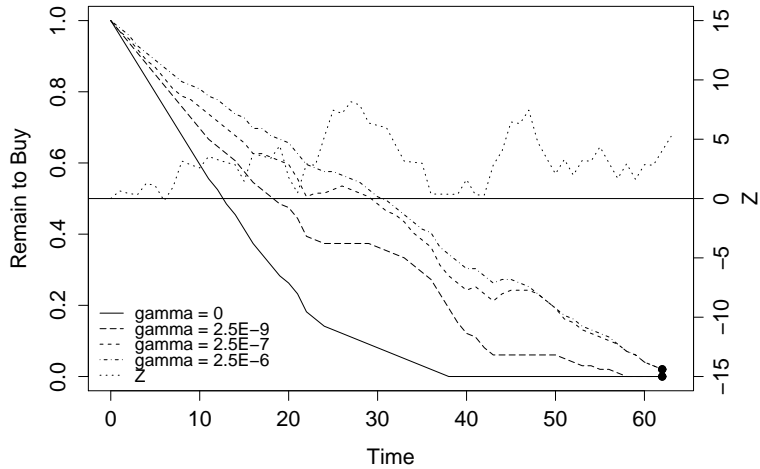}\caption{Optimal Strategies for Different Values of $\gamma$ for Price Trajectory
1}
\label{gamma_up}
\end{figure}
\begin{figure}[H]
\centering{}\includegraphics[width=0.65\textwidth]{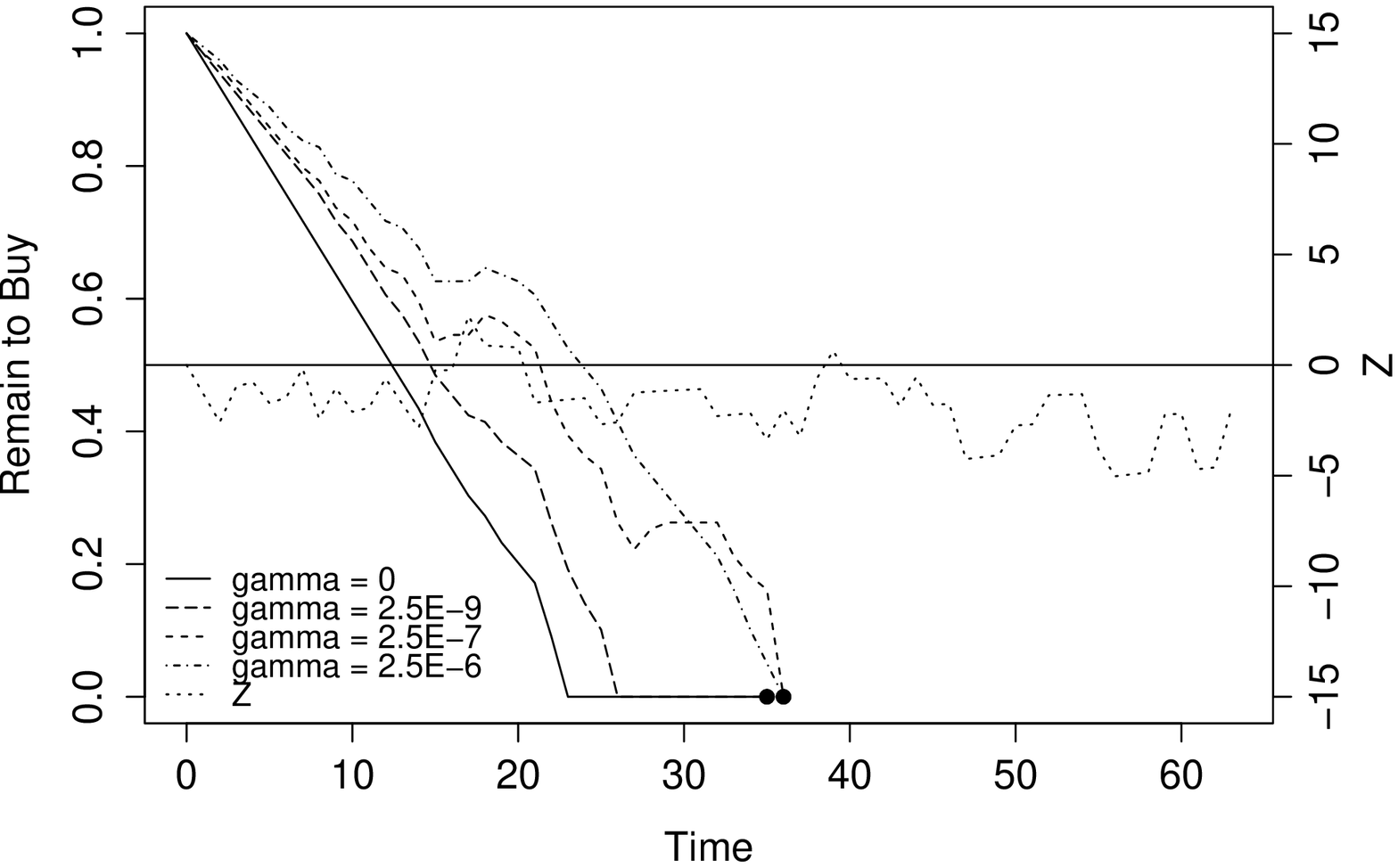}\caption{Optimal Strategies for Different Values of $\gamma$ for Price Trajectory
2}
\label{gamma_down}
\end{figure}
\begin{figure}[H]
\begin{centering}
\includegraphics[width=0.65\textwidth]{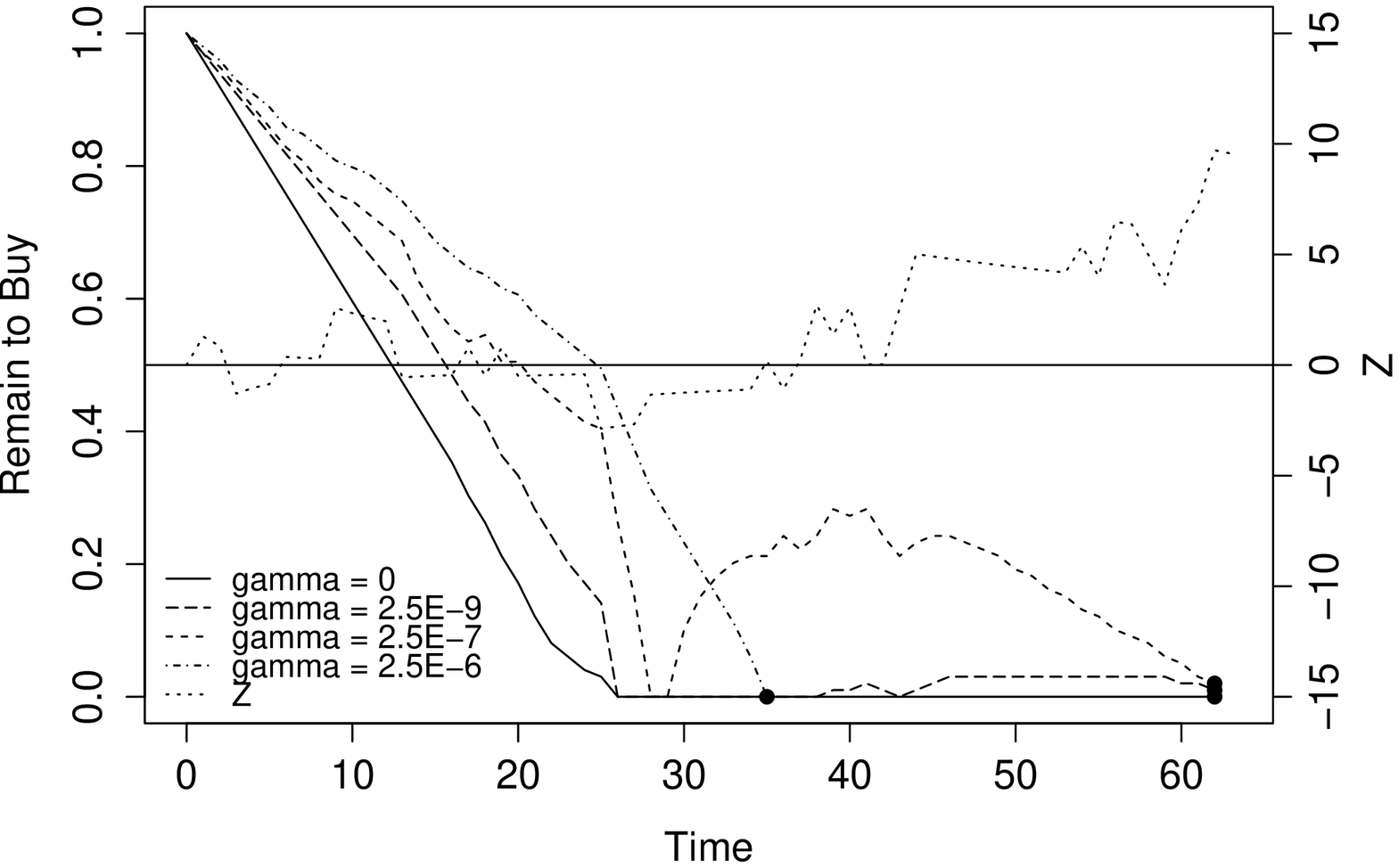}
\par\end{centering}

\centering{}\caption{Optimal Strategies for Different Values of $\gamma$ for Price Trajectory
3}
\label{gamma_mid}
\end{figure}

In terms of prices, we obtain:

\begin{table}[H]
\begin{centering}
\begin{tabular}{|c|c|c|c|c|}
\hline
$\gamma$ & $0$ & $2.5\times10^{-9}$ & $2.5\times10^{-7}$ & $2.5\times10^{-6}$\tabularnewline
\hline
$\frac{\Pi}{Q}$ & $-0.621$ & $-0.609$ & $-0.503$ & $-0.190$\tabularnewline
\hline
\end{tabular}
\par\end{centering}
\end{table}

We see that the price is an increasing function of the risk aversion parameter $\gamma$. This is natural as this price is an indifference price  that takes account of the risk. In particular, if we send $\gamma$ to unrealistically high values, the price turns out to be positive\footnote{For $\gamma = 1$ we obtained $\frac{\Pi}{Q} = 0.015$.}, meaning that the cost associated to risk and execution costs cannot be compensated by the value associated to the optionality of the ASR contract.

\subsubsection{Effect of Execution Costs}

Let us come then to execution costs and more precisely to the illiquidity parameter $\eta$. We considered our reference case with 3 values for the parameter $\eta$: $0.01, 0.1,$ and $0.2$. We concentrate on our third price trajectory (Figure \ref{eta_mid}) as it shows perfectly the role of $\eta$. We indeed see that the more liquid the stock, the faster $q$ reaches 0.

\begin{figure}[H]
\begin{centering}
\includegraphics[width=0.7\textwidth]{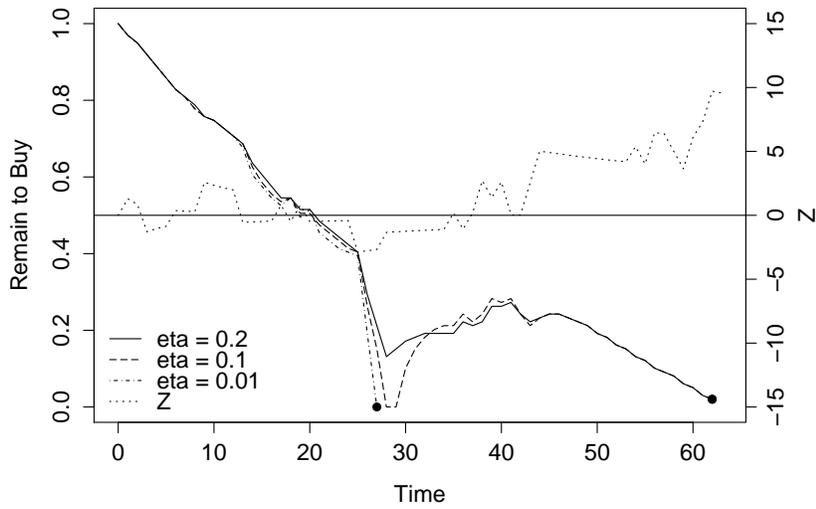}
\par\end{centering}

\centering{}\caption{Optimal Strategies for Different Values of $\eta$ for Price Trajectory
3}
\label{eta_mid}
\end{figure}

The influence of $\eta$ is important as far as prices are concerned since $\eta$ is the main driver of execution costs. We have indeed the following prices:

\begin{table}[H]
\begin{centering}
\begin{tabular}{|c|c|c|c|}
\hline
$\eta$ & $0.01$ & $0.1$ & $0.2$\tabularnewline
\hline
$\frac{\Pi}{Q}$ & -0.554  & $-0.503$ & $-0.461$\tabularnewline
\hline
\end{tabular}
\par\end{centering}
\end{table}

The less liquid the stock, the much it costs to the bank over the buying process.

\subsubsection{Effect of Volatility}

Let us finish with the effect of volatility. We considered our reference case with 3 values for the parameter $\sigma$: $0.3, 0.6,$ and $1.2$. We focus on the second price trajectory  as it permits to understand the role of volatility. What we see on Figure \ref{sigma_down} is that the higher the volatility the more important the size of the round trips. This is linked to risk aversion: the higher the volatility, the more important the incentive to jump on a trajectory corresponding to a better hedge after an increase in $Z$. However, $\sigma$ hardly influences the optimal delivery date as $A$ and $S$ are both influenced by $\sigma$.

\begin{figure}[H]
\centering{}\includegraphics[width=0.7\textwidth]{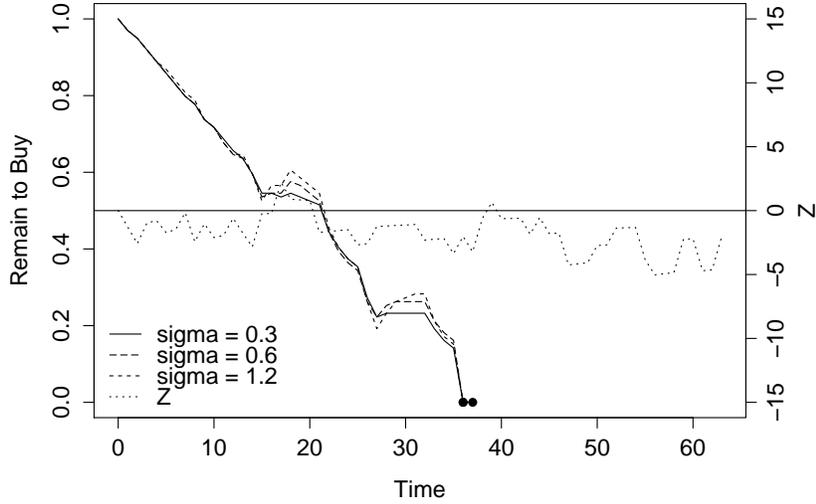}\caption{Optimal Strategies for Different Values of $\sigma$ for Price Trajectory
2}
\label{sigma_down}
\end{figure}

In terms of prices, we obtained the following figures. We see that $\sigma$ is a very important driver of prices.

\begin{table}[H]
\begin{centering}
\begin{tabular}{|c|c|c|c|}
\hline
$\sigma$ & $0.3$ & $0.6$ & $1.2$\tabularnewline
\hline
$\frac{\Pi}{Q}$ & $-0.251$ & $-0.503$ & $-0.914$\tabularnewline
\hline
\end{tabular}
\par\end{centering}

\end{table}

When $\sigma$ increases, the price decreases because the value of the optionality component is higher. However, due to risk aversion, the opposite effect may also be present. In fact, the latter effect only appeared to dominate the former for unrealistic values of $\sigma$ in our numerical simulations ($\sigma \ge 30$).

\section*{Conclusion}

In this paper, we presented a model to find the optimal strategy of a bank entering an ASR contract with a firm. Our discrete time model embeds the main effects involved in the problem and permits to define an indifference price for the contract in addition to providing an optimal strategy. One of the main limitations of our model is that only one decision is taken every day and hence the behavior of the bank is not influenced by intraday price changes. Taking account of intraday price changes requires considering a continuous model for $S$ along with price fixing at discrete times for $A$, making then impossible the reduction to a 3-dimension problem.

\section*{Appendix A: Bounds for $\left(\theta_{n}\right)_{n}$}
In the definition of $\left(\theta_{n}\right)_{n\geq0}$ of Section \ref{sect:reduc-var}, nothing
guarantees a priori that the functions only take finite values. We now provide
bounds in order to prove that the functions $\theta_{n}$ -- and therefore
$\Pi$ -- are finite.
\begin{prop}
$\forall n<N,q\in\R,Z\in\R,$
\begin{eqnarray*}
\theta_{n}\left(q,Z\right) & \leq & \frac{1}{\gamma}g\left(\gamma\left(q-Q\left(1-\dfrac{n}{N}\right)\right)\right)+\frac{1}{\gamma}\sum_{j=n+1}^{N-1}g\left(\gamma Q\left(1-\frac{j}{N}\right)\right)\\
 &  & -Q\left(1-\frac{n}{N}\right)\sigma\sqrt{\delta t}Z+L\left(\frac{q}{V_{n+1}\delta t}\right)V_{n+1}\delta t
\end{eqnarray*}
where $g$ denotes the cumulant-generating function of the random
variable $\sigma\sqrt{\delta t}\epsilon_{1}$.\end{prop}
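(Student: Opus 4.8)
The plan is to bound the infimum in the definition \eqref{eq:def-theta} of $\theta_n$ from above by evaluating the objective at one conveniently chosen admissible strategy. The natural candidate is to deliver at the terminal date, $n^\star = N$, and to purchase all remaining shares in the very first period: set $v_n = q/\delta t$ and $v_j = 0$ for $n+1 \le j \le N-1$. This pair $(v,n^\star)$ clearly belongs to $\mathcal{A}_n$ (the $v_j$ are deterministic, hence adapted, and $N \in (\mathcal{N}\cup\{N\})\cap\{n,\dots,N\}$). With this choice one has $q_{n+1} = \cdots = q_N = 0$, so the terminal penalty satisfies $\ell(q_{n^\star}) = \ell(0) = 0$, and since $L(0)=0$ the running execution cost reduces to its first-period value $L\!\left(\frac{q}{V_{n+1}\delta t}\right)V_{n+1}\delta t$.

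First I would substitute these inventories into the risk term of \eqref{eq:def-theta}. Since $q_n = q$ and $q_j = 0$ for $j \ge n+1$, the bracketed sum becomes
\[
\left(q - \left(1-\frac{n}{N}\right)Q\right)\epsilon_{n+1} - \sum_{j=n+1}^{N-1} Q\left(1-\frac{j}{N}\right)\epsilon_{j+1},
\]
while the $Z$ term reads $-\left(1-\frac{n}{N}\right)QZ$. The latter, together with the execution cost, is deterministic and can therefore be pulled out of $\frac{1}{\gamma}\log\mathbb{E}[\exp(\gamma\,\cdot\,)]$, contributing exactly the terms $-Q\left(1-\frac{n}{N}\right)\sigma\sqrt{\delta t}\,Z$ and $L\!\left(\frac{q}{V_{n+1}\delta t}\right)V_{n+1}\delta t$ of the announced bound.

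It then remains to handle the purely stochastic part. Using that the $(\epsilon_{j+1})_{n\le j\le N-1}$ are independent, the expectation of the exponential factorizes over $j$, so that $\frac{1}{\gamma}\log\mathbb{E}[\exp(\gamma\,\cdot\,)]$ splits into a sum of one-dimensional cumulant-generating-function evaluations. The first-period factor yields $\frac{1}{\gamma}g\!\left(\gamma\left(q - Q\left(1-\frac{n}{N}\right)\right)\right)$, and each later factor ($j \ge n+1$) yields $\frac{1}{\gamma}g$ evaluated at $\mp\gamma Q\left(1-\frac{j}{N}\right)$, which assemble into $\frac{1}{\gamma}\sum_{j=n+1}^{N-1} g\!\left(\gamma Q\left(1-\frac{j}{N}\right)\right)$. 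Collecting the four contributions gives the claimed inequality. The only point requiring care is the sign of the argument of $g$ in the summed terms: the immediate-liquidation strategy produces coefficients $-Q\left(1-\frac{j}{N}\right)$, so strictly one obtains $g\!\left(-\gamma Q\left(1-\frac{j}{N}\right)\right)$; this matches the stated $g\!\left(\gamma Q\left(1-\frac{j}{N}\right)\right)$ because the innovation law is symmetric and hence $g$ is even (as for the Gaussian and the pentanomial law used later). I do not expect any genuine obstacle beyond checking admissibility of the candidate strategy and this evenness remark; the whole argument is a direct substitution followed by a factorization through independence.
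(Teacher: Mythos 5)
Your proposal is correct and follows essentially the paper's own proof: the same candidate strategy (buy everything in the first period, deliver at $n^{\star}=N$), the same separation of the deterministic $Z$ and execution-cost terms, and the same factorization of the expectation using independence of the $(\epsilon_j)_j$ into cumulant-generating-function evaluations. You are in fact more careful than the paper on the one delicate point: the paper silently writes $g\left(\gamma Q\left(1-\frac{j}{N}\right)\right)$ where the computation produces $g\left(-\gamma Q\left(1-\frac{j}{N}\right)\right)$, a substitution which, as you note, needs evenness of $g$ (i.e. symmetry of the innovation law, satisfied by the Gaussian and pentanomial cases used in the paper, though not listed among its standing assumptions — strictly speaking both your proof and the paper's require either this symmetry or restating the bound with the negative argument, and also need the moment-generating function to be finite at the negative arguments involved).
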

\begin{proof}
Given $n<N$, $q$ and $Z\in\R$, we consider the strategy consisting
in sending an order of size $q$ at time $t_n$ and wait until
maturity $T$ to deliver the shares, which is given by:
\[
\left\{ \begin{array}{cclc}
v_{n} & = & q/\delta t\\
v_{j} & = & 0 & \forall j\in\{n+1,\ldots,N-1\}\\
n^{\star} & = & N
\end{array}\right.
\]
Then we have:
\begin{eqnarray*}
\theta_{n}\left(q,Z\right) & \leq & \logexpo\sigma\sqrt{\delta t}\left(q\epsilon_{n+1}-\sum_{j=n}^{N-1}\left(Q\left(1-\frac{j}{N}\right)\right)\epsilon_{j+1}-Q\left(1-\frac{n}{N}\right)Z\right)\\
 &  & +L\left(\frac{q}{V_{n+1}\delta t}\right)V_{n+1}\delta t\logexpc\\
 & \leq & \frac{1}{\gamma}g\left(\gamma\left(q-Q\left(1-\dfrac{n}{N}\right)\right)\right)+\frac{1}{\gamma}\sum_{j=n+1}^{N-1}g\left(\gamma Q\left(1-\frac{j}{N}\right)\right)\\
 &  & -Q\left(1-\frac{n}{N}\right)\sigma\sqrt{\delta t}Z+L\left(\frac{q}{V_{n+1}\delta t}\right)V_{n+1}\delta t
\end{eqnarray*}
where we used the fact that $\left(\epsilon_{j}\right)_j$ are i.i.d.
random variables.\end{proof}
\begin{prop}
We also provide a lower bound for $\theta_{n}$: $\forall n\le N,\exists C_{n},D_{n}\in\R_{+},$
\begin{eqnarray*}
\theta_{n}(q,Z) & \ge & -C_{n}Z^{+}-D_{n}
\end{eqnarray*}
\end{prop}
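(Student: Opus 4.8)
The plan is to prove the bound by \emph{backward induction} on $n$, using the Bellman equation \eqref{eq:bellman-theta} together with the induction hypothesis on $\theta_{n+1}$. Throughout I restrict attention to $q\in[0,Q]$, which is legitimate by the earlier remark that inventories outside $[0,Q]$ are suboptimal; this keeps the coefficient $q-\frac{Q}{n+1}$ that appears below in a bounded range, so that the constants can be chosen uniformly in $q$. For the base case $n=N$ one has $\theta_N(q,Z)=\ell(q)\ge 0$, since $\ell$ is nonnegative (even, increasing on $\R_+$, with $\ell(0)=0$), and so $C_N=D_N=0$ works.

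For the inductive step, assume $\theta_{n+1}(q',Z')\ge -C_{n+1}(Z')^+ - D_{n+1}$ for all $q',Z'$. Because $\ell\ge 0$, it suffices in every branch of \eqref{eq:bellman-theta} to bound $\tilde\theta_{n,n+1}$ from below: indeed $\theta_n$ equals either $\tilde\theta_{n,n+1}$ or $\min\{\tilde\theta_{n,n+1},\ell(q)\}$, and $\ell(q)\ge 0$ already dominates any bound of the form $-C_nZ^+-D_n$. Starting from \eqref{eq:def-thetatilde}, I first discard the nonnegative running-cost term $L(\tfrac{v}{V_{n+1}})V_{n+1}\delta t$; since $\exp$ is increasing and $\gamma>0$, dropping a nonnegative summand inside the exponential only decreases the expectation, for each fixed $v$. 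Then I substitute the induction hypothesis at the point $\big(q-v\delta t,\frac{n}{n+1}(Z+\epsilon_{n+1})\big)$ and apply the two elementary inequalities $\big[\tfrac{n}{n+1}(Z+\epsilon_{n+1})\big]^+\le Z^+ + \epsilon_{n+1}^+$ and $-\sigma\sqrt{\delta t}\frac{Q}{n+1}Z\ge -\sigma\sqrt{\delta t}\frac{Q}{n+1}Z^+$, which replace the linear $Z$-term by a multiple of $Z^+$.

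After these substitutions the $Z$-dependent contributions $-\sigma\sqrt{\delta t}\frac{Q}{n+1}Z^+$ and $-C_{n+1}Z^+$, together with the constant $-D_{n+1}$, are deterministic and factor out of $\frac1\gamma\log\E[\exp(\gamma\,\cdot\,)]$ as additive terms, leaving only the one-dimensional expectation $\frac1\gamma\log\E\big[\exp\big(\gamma(\sigma\sqrt{\delta t}(q-\tfrac{Q}{n+1})\epsilon_{n+1}-C_{n+1}\epsilon_{n+1}^+)\big)\big]$, which no longer depends on $v$ and therefore bounds the infimum over $v$. This gives $\theta_n(q,Z)\ge -\big(\sigma\sqrt{\delta t}\tfrac{Q}{n+1}+C_{n+1}\big)Z^+ - D_{n+1} + (\text{that expectation})$, so one sets $C_n:=\sigma\sqrt{\delta t}\frac{Q}{n+1}+C_{n+1}$ and absorbs the finite expectation term into a nonnegative $D_n$.

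The main obstacle is the finiteness of this last expectation. Its integrand involves $\exp(\gamma\sigma\sqrt{\delta t}(q-\frac{Q}{n+1})\epsilon_{n+1})$, whose coefficient changes sign as $q$ ranges over $[0,Q]$, so controlling it requires exponential integrability of $\epsilon_{n+1}$ on both tails, i.e. finiteness of its moment generating function on the relevant bounded set of arguments — precisely the integrability already invoked for the upper bound. The correction $-C_{n+1}\epsilon_{n+1}^+$ only improves the right tail and never hurts, so finiteness reduces to the moment-generating-function hypothesis, and boundedness of $q$ over $[0,Q]$ then yields a single finite constant. This closes the induction and shows that each $\theta_n$ — and in particular $\Pi=\theta_0(Q,0)$ — is bounded below.
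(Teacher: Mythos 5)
Your induction skeleton --- the base case $\theta_N=\ell\ge 0$, discarding the nonnegative $L$-term, the elementary inequalities $-Z\ge -Z^{+}$ and $\bigl[\frac{n}{n+1}(Z+\epsilon_{n+1})\bigr]^{+}\le Z^{+}+\epsilon_{n+1}^{+}$, and the final comparison with $\ell(q)\ge 0$ --- is the same as the paper's, but the one genuinely delicate point, the leftover stochastic term, is mishandled. Writing $Y(q):=\sigma\sqrt{\delta t}\bigl(q-\frac{Q}{n+1}\bigr)\epsilon_{n+1}-C_{n+1}\epsilon_{n+1}^{+}$, what you must show in order to absorb the residual term $\frac{1}{\gamma}\log\E\bigl[\exp\bigl(\gamma Y(q)\bigr)\bigr]$ into a finite $D_{n}$ is a \emph{lower} bound on it, uniform in $q$. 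Instead you argue about its \emph{finiteness}, invoking exponential integrability of $\epsilon_{n+1}$ on both tails. That is the wrong direction of control: if the expectation were $+\infty$ the desired inequality would hold trivially, since it appears with a plus sign in your lower bound; the danger is the logarithm being very negative, and pointwise finiteness does not yield a uniform lower bound without an additional continuity/convexity-plus-compactness argument that you do not supply. Moreover, the two-sided integrability you invoke is not among the paper's hypotheses: the moment-generating function of $\epsilon_{n}$ is only assumed to exist on $\R_{+}$. There is a second, related gap: the restriction to $q\in[0,Q]$ breaks the induction, because the infimum in \eqref{eq:def-thetatilde} runs over all $v\in\R$, so the induction hypothesis is invoked at inventories $q-v\delta t$ outside $[0,Q]$. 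You assume the bound for all $q'$ but prove it only on $[0,Q]$, so the statement established at step $n$ is weaker than the one used at step $n+1$; and restricting the infimum to inventories in $[0,Q]$ is not permitted either, since shrinking the feasible set increases the infimum, so a lower bound on the restricted infimum says nothing about $\tilde{\theta}_{n,n+1}$.

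Both gaps are closed by a single application of Jensen's inequality, which is exactly the paper's route: by convexity of $\exp$, $\frac{1}{\gamma}\log\E\bigl[e^{\gamma X}\bigr]\ge\E[X]$. The paper applies this at the very start of the inductive step, so that $\E[\epsilon_{n+1}]=0$ annihilates the term $\sigma\sqrt{\delta t}\bigl(q-\frac{Q}{n+1}\bigr)\epsilon_{n+1}$ for \emph{every} $q\in\R$, and the only moment needed is $\E[\epsilon_{n+1}^{+}]\le\sqrt{\E[\epsilon_{n+1}^{2}]}=1$; equivalently, you can apply it to your residual expectation to get the uniform-in-$q$ bound $\frac{1}{\gamma}\log\E\bigl[\exp\bigl(\gamma Y(q)\bigr)\bigr]\ge\E[Y(q)]=-C_{n+1}\E[\epsilon_{n+1}^{+}]$. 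Either way, no moment-generating function and no compactness in $q$ are needed, the restriction to $[0,Q]$ becomes unnecessary, and one recovers the paper's recursion $C_{n}=\frac{n}{n+1}C_{n+1}+\sigma\sqrt{\delta t}\frac{Q}{n+1}$ and $D_{n}=\frac{n}{n+1}C_{n+1}\E[\epsilon_{n+1}^{+}]+D_{n+1}$. With that replacement your argument becomes correct and essentially coincides with the paper's proof.
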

\begin{proof}
We prove this proposition by backward induction.

The proposition is true for $n=N$ with $C_{N}=0$ and $D_{N}=0$
as $\ell\ge0$.

We have for $n\in\{1,\ldots,N-1\}$,
\begin{eqnarray*}
\tilde{\theta}_{n,n+1}\left(q,Z\right) & = & \inf_{v\in\R}\logexpo\sigma\sqrt{\delta t}\left(\left(q-\dfrac{Q}{n+1}\right)\epsilon_{j+1}-\dfrac{Q}{n+1}Z\right)\\
 &  & +L\left(\frac{v}{V_{n+1}}\right)V_{n+1}\delta t+\theta_{n+1}\left(q-v\delta t,\dfrac{n}{n+1}\left(Z+\epsilon_{n+1}\right)\right)\logexpc\\
 & \geq & -\sigma\sqrt{\delta t}\dfrac{Q}{n+1}Z+\inf_{v\in\R}\mathbb{E}\Biggl[\theta_{n+1}\left(q-v\delta t,\dfrac{n}{n+1}\left(Z+\epsilon_{n+1}\right)\right)\Biggr].
\end{eqnarray*}
Assume we have $C_{n+1},D_{n+1}\in\R_{+}$ such that $\forall q,Z\in\R,$
\begin{eqnarray*}
\theta_{n+1}(q,Z) & \geq & -C_{n+1}Z^{+}-D_{n+1}.
\end{eqnarray*}
Then
\begin{eqnarray*}
\tilde{\theta}_{n,n+1}\left(q,Z\right) & \geq & -\sigma\sqrt{\delta t}\dfrac{Q}{n+1}Z+\mathbb{E}\Biggl[-C_{n+1}\dfrac{n}{n+1}\left(Z+\epsilon_{n+1}\right)^{+}-D_{n+1}\Biggr]\\
 & \geq & -\sigma\sqrt{\delta t}\dfrac{Q}{n+1}Z^{+}-C_{n+1}\dfrac{n}{n+1}\mathbb{E}\left[\left(Z+\epsilon_{n+1}\right)^{+}\right]-D_{n+1}\\
 & \geq & -\sigma\sqrt{\delta t}\dfrac{Q}{n+1}Z^{+}-C_{n+1}\dfrac{n}{n+1}\mathbb{E}\left[Z^{+}+\epsilon_{n+1}^{+}\right]-D_{n+1}\\
 & \geq & -\sigma\sqrt{\delta t}\dfrac{Q}{n+1}Z^{+}-C_{n+1}\dfrac{n}{n+1}Z^{+}-C_{n+1}\dfrac{n}{n+1}\mathbb{E}\left[\epsilon_{n+1}^{+}\right]-D_{n+1}\\
 & \geq & -\left(C_{n+1}\dfrac{n}{n+1}+\sigma\sqrt{\delta t}\dfrac{Q}{n+1}\right)Z^{+}-\left(C_{n+1}\dfrac{n}{n+1}\mathbb{E}\left[\epsilon_{n+1}^{+}\right]+D_{n+1}\right).
\end{eqnarray*}
Let us define
\[
\begin{array}{cclcc}
C_{n} & = & \frac{n}{n+1}C_{n+1}+\sigma\sqrt{\delta t}\dfrac{Q}{n+1} & \ge & 0\\
D_{n} & = & \frac{n}{n+1}C_{n+1}\mathbb{E}\left[\epsilon_{n+1}^{+}\right]+D_{n+1} & \ge & 0
\end{array}
\]
Then
\begin{align*}
\theta_{n}(q,Z) &=  \min\{\ell(q),\tilde{\theta}_{n,n+1}(q,Z)\}\\
 &\geq \min\{0,\tilde{\theta}_{n,n+1}(q,Z)\} \geq \min\{0,-C_{n}Z^{+}-D_{n}\} =  -C_{n}Z^{+}-D_{n}.
\end{align*}
\end{proof}

\section*{Appendix B: A model in discrete time for permanent market impact}

In Section \ref{sect: permanent market impact}, we introduced a model in discrete time for permanent market impact. This model comes from the following model in continuous time:
\begin{itemize}
\item The number of shares to be bought evolves as:
$$d\check{q}_{t}  =  -\check{v}_{t}dt.$$
\item The price has the following dynamics:
$$d\check{S}_{t}  =  \sigma dW_{t}+f(|Q-\check{q}_{t}|)\check{v}_{t}dt.$$
\item The cash account evolves as:
$$d\check{X}_{t}  =  \check{v}_{t}\check{S}_{t}dt+L\left(\dfrac{\check{v}_{t}}{\check{V}_{t}}\right)\check{V}_{t}dt.$$
\end{itemize}
Solving the above stochastic differential equations, we obtain $\forall s<t$:
\begin{align*}
\check{S}_{t}-\check{S}_{s} &= \sigma\left(W_{t}-W_{s}\right)+\int_{s}^{t}f(|Q-\check{q}_{r}|)\check{v}_{r}dr\\
 &= \sigma\left(W_{t}-W_{s}\right)-\int_{q_{s}}^{q_{t}}f(|Q-y|)dy\\
 &= \sigma\left(W_{t}-W_{s}\right)+G(q_{t})-G(q_{s}).
\end{align*}
Writing $S_n = \check{S}_{t_n}$, we obtain:
$$ S_{n+1}=  S_{n}+\sigma\sqrt{\delta t}\epsilon_{n+1}+\left(G(q_{n+1})-G(q_{n})\right), $$
where $\epsilon_{n+1} = \frac{W_{t_{n+1}}-W_{t_n}}{\sqrt{\delta t}}$ is a standard normal random variable. This is in line with the setup of the model of Section \ref{sect: Optimal strategy}.

Coming now to the cash account we have $\forall s<t$:
$$ \check{X}_{t}-\check{X}_{s} =  \int_{s}^{t}\check{v}_{r}\check{S}_{r}dr+\int_{s}^{t}L\left(\dfrac{\check{v}_{r}}{\check{V}_{r}}\right)\check{V}_{r}dr. $$
After integrating by parts, we obtain:
\begin{align*}
\check{X}_{t}-\check{X}_{s} = &\int_{s}^{t}\check{v}_{r}\check{S}_{r}dr+\int_{s}^{t}L\left(\dfrac{\check{v}_{r}}{\check{V}_{r}}\right)\check{V}_{r}dr\\
= &(\check{q}_{s}-\check{q}_{t})\check{S}_{t}-\int_{s}^{t}(\check{q}_{s}-\check{q}_{r})\sigma dW_{r}-\int_{s}^{t}(\check{q}_{s}-\check{q}_{r})f(|Q-\check{q}_{r}|)\check{v}_{r}dr+\int_{s}^{t}L\left(\dfrac{\check{v}_{r}}{\check{V}_{r}}\right)\check{V}_{r}dr\\
= &(\check{q}_{s}-\check{q}_{t})\check{S}_{t}-\int_{s}^{t}(\check{q}_{s}-\check{q}_{r})\sigma dW_{r}+\int_{\check{q}_{s}}^{\check{q}_{t}}(\check{q}_{s}-y)f(|Q-y|)dy+\int_{s}^{t}L\left(\dfrac{\check{v}_{r}}{\check{V}_{r}}\right)\check{V}_{r}dr\\
= &(\check{q}_{s}-\check{q}_{t})\check{S}_{t}-\int_{s}^{t}(\check{q}_{s}-\check{q}_{r})\sigma dW_{r}+\check{q}_{s}\int_{\check{q}_{s}}^{\check{q}_{t}}f(|Q-y|)dy\\
& -\int_{\check{q}_{s}}^{\check{q}_{t}}yf(|Q-y|)dy+\int_{s}^{t}L\left(\dfrac{\check{v}_{r}}{\check{V}_{r}}\right)\check{V}_{r}dr\\
= &(\check{q}_{s}-\check{q}_{t})\check{S}_{t}-\int_{s}^{t}(\check{q}_{s}-\check{q}_{r})\sigma dW_{r}-\check{q}_{s}\left(G(\check{q}_{t})-G(\check{q}_{s})\right)\\
&+\left(F(\check{q}_{t})-F(\check{q}_{s})\right)+\int_{s}^{t}L\left(\dfrac{\check{v}_{r}}{\check{V}_{r}}\right)\check{V}_{r}dr.
\end{align*}
Now, writing $S_n = \check{S}_{t_n}$, $X_n =  \check{X}_{t_n}$ and assuming that $\check{V}_t$ and $\check{v}_t$ are both constant on $[t_n, t_{n+1}]$, respectively equal to $V_{n+1}$ and $v_n$, we obtain:
\begin{align*}
X_{n+1}-X_{n} =& S_{n+1}v_{n}\delta t-\sigma v_{n}\frac{\delta t^{\frac{3}{2}}}{\sqrt{3}}\epsilon'_{n+1}-q_{n}\left(G(q_{n+1})-G(q_{n})\right)\\
& +\left(F(q_{n+1})-F(q_{n})\right)+L\left(\dfrac{v_{n}}{V_{n+1}}\right)V_{n+1}\delta t,
\end{align*}
where $\epsilon'_{n+1} = \frac{\sqrt{3}}{\delta t^{\frac{3}{2}}}\int_{t_n}^{t_{n+1}}\left(r-t_n\right)dW_{r}$ is such that $\left(\epsilon_{n+1},\epsilon'_{n+1}\right)  \sim \mathcal{N}\left(0,\left(\begin{array}{cc}
1 & \dfrac{\sqrt{3}}{2}\\
\dfrac{\sqrt{3}}{2} & 1
\end{array}\right)\right).$ This is in line with the setup of the model of Section \ref{sect: permanent market impact}.

\vspace{3mm}
We finish this appendix with a focus on the term involved in the final cost. If we consider two times $s$ and $t$ with $s<t$ and $\check{q}_t=0$, then:
\begin{align*}
\check{X}_{t}-\check{X}_{s} &=\int_{s}^{t}\check{v}_{r}\check{S}_{r}dr+\int_{s}^{t}L\left(\dfrac{\check{v}_{r}}{\check{V}_{r}}\right)\check{V}_{r}dr\\
 &=\check{q}_{s}\check{S}_{s}+\int_{s}^{t}\check{q}_{r}\sigma dW_{r}+\int_{s}^{t}\check{q}_{r}f(|Q-\check{q}_{r}|)\check{v}_{r}dr+\int_{s}^{t}L\left(\dfrac{\check{v}_{r}}{\check{V}_{r}}\right)\check{V}_{r}dr\\
  &=\check{q}_{s}\check{S}_{s}+\int_{s}^{t}\check{q}_{r}\sigma dW_{r}+ F(0)-F(\check{q}_{s})+\int_{s}^{t}L\left(\dfrac{\check{v}_{r}}{\check{V}_{r}}\right)\check{V}_{r}dr.
\end{align*}
If $s$ corresponds to the time of delivery, then we can consider that we buy the shares remaining to be bought (that is $q_{n^\star}$ in our model) in exchange of $q_{n^\star}S_{n^\star} + F(0) - F(q_{n^\star})$ plus a risk liquidity premium represented by $\ell(q_{n^\star})$.

\bibliographystyle{plain}

\begin{thebibliography}{10}

\bibitem{alfonsi}
A. Alfonsi, A. Schied, A. Slynko, Order book resilience, price manipulation, and the positive portfolio problem. SIAM Journal on Financial Mathematics, 3(1), 511-533, 2012.

\bibitem{almgren}R. Almgren, N. Chriss, Optimal execution of portfolio
transactions. Journal of Risk, 3, 5-40, 2001.

\bibitem{almgrenciti}R. Almgren, C. Thum, E. Hauptmann, H. Li. Direct estimation of equity market impact. Risk, 18(7):58–62, 2005.

\bibitem{bell}  R.E. Bellman, Dynamic Programming. Princeton University Press, Princeton, NJ, 1957.

\bibitem{tcbarlessoner}G. Barles, H. M. Soner, Option pricing with
transaction costs and a nonlinear Black-Scholes equation, Finance
and Stochastics, 2, 369-397, 1998.

\bibitem{tccvitanic1}J. Cvitanić, I. Karatzas, Hedging and portfolio
optimization under transaction costs: a martingale approach, Mathematical
Finance, 6, 133-165, 1996.

\bibitem{tccvitanic2}J. Cvitanić, H. Pham, N. Touzi, A closed-form
solution to the problem of super-replication under transaction costs,
Finance and Stochastics, 3(1), 35-54, 1999.

\bibitem{gatheral}  J. Gatheral, No-Dynamic-Arbitrage and Market Impact, Quantitative Finance, Vol. 10, No. 7, pp. 749-759, 2010

\bibitem{gueant}O. Guéant, Optimal execution and block trade pricing:
a general framework, preprint, 2013

\bibitem{gueantperm}O. Guéant, Permanent market impact can be nonlinear,
preprint, 2013

\bibitem{gueantpu}O. Guéant, J. Pu, Option pricing and hedging with
execution costs and market impact, preprint, 2013

\bibitem{jaimungal}S. Jaimungal, D. Kinzebulatov, D. Rubisov, Optimal
Accelerated Share Repurchase, preprint, 2013

\bibitem{lehalle1}C.-A. Lehalle, S. Laruelle, R. Burgot, S. Pelin,
M. Lasnier, Market Microstructure in Practice. World Scientific publishing,
2013.

\bibitem{lehalle2}C.-A. Lehalle, M. Lasnier, P. Bessson, H. Harti,
W. Huang, N. Joseph, L. Massoulard, What does the saw-tooth pattern
on US markets on 19 july 2012 tell us about the price formation process.
Crédit Agricole Cheuvreux Quant Note, Aug. 2012.

\bibitem{tcleland}H. E. Leland, Option pricing and replication with
transactions costs. The Journal of Finance, 40(5), 1283-1301, 1985

\bibitem{lialmgren}T. M. Li, R. Almgren, A Fully-Dynamic Closed-Form
Solution for $\Delta$-Hedging with Market Impact, to appear in Operations
Research, 2013.

\bibitem{rogerssingh}L. C. Rogers, S. Singh, The cost of illiquidity
and its effects on hedging. Mathematical Finance, 20(4), 597-615,
2010.

\bibitem{schied}A. Schied, T. Schöneborn, M. Tehranchi. Optimal basket
liquidation for CARA investors is deterministic. Applied Mathematical
Finance, 17(6), 471-489, 2010.
\end{thebibliography}

\end{document}